\newtheorem{theorem}{Theorem}[section]
\newtheorem{definition}[theorem]{Definition}
\newtheorem{lemma}[theorem]{Lemma}
\newtheorem{corollary}[theorem]{Corollary}
\journal{Journal of Computer and System Sciences}
\begin{document}

\begin{frontmatter}




\title{On the Connectivity Preserving Minimum  Cut Problem}


\author[char]{Qi Duan}
\ead{qduan@uncc.edu}

\cortext[char]{Corresponding author, phone (704)524-9768}

\author[buf]{Jinhui Xu}
\ead{jinhui@buffalo.edu}

\address[char]{Department of Software and Information Systems\\
University of North Carolina at Charlotte \\
Charlotte, NC 28223, USA}

\address[buf]{Department of Computer Science and Engineering \\
  State University of New York at Buffalo\\
  Buffalo, NY 14260, USA}

\begin{abstract}
In this paper, we study a generalization of the classical minimum cut problem, called {\em Connectivity Preserving Minimum Cut (CPMC)} problem,
which seeks a minimum cut to separate a pair (or pairs) of source and destination nodes and meanwhile ensure the connectivity between the source and its partner node(s). The CPMC problem is a rather powerful formulation for a set of problems and finds applications in many other areas, such as network security, image processing, data mining, pattern recognition, and machine learning. For this important problem, we consider two variants, connectivity preserving minimum node cut (CPMNC) and connectivity preserving minimum edge cut (CPMEC). For CPMNC, we show that it cannot be approximated within $\alpha logn$ for some constant $\alpha$ unless $P$=$NP$, and cannot be  approximated within any $poly(logn)$ unless $NP$ has quasi-polynomial time algorithms. The hardness results hold even for graphs with unit weight and bipartite graphs.  Particularly,
we show that  polynomial time solutions exist for CPMEC in planar graphs and for CPMNC in some special planar graphs.
The hardness of CPMEC in general graphs remains open, but the polynomial time
 algorithm in  planar graphs still has important
practical applications.
\end{abstract}

\begin{keyword}

Minimum Cut; Inapproximability; Connectivity Preserving
\end{keyword}

\end{frontmatter}


\section{Introduction} 

Minimum cut is one of the most fundamental problems in computer science and has numerous applications in many different areas~\cite{Papadimitriou93,Vazirani04,PS98,Lawler01}. 
In this paper, we consider a new generalization of the minimum cut problem, called {\em connectivity preserving minimum cut (CPMC)} problem arising in several areas. In this problem, we are given a connected graph $G=(V,E)$ with positive node (or edge) weights, a source node $s_1$ and its partner node $s_2$, and a destination node $t$. The objective is to compute a cut with minimum weight to disconnect the source $s_1$ and destination $t$, and meanwhile preserve the connectivity of $s_1$ and its partner node $s_2$ (i.e., $s_{1}$ and $s_{2}$ are connected after the cut). The weights can be associated with either the nodes (i.e., vertices) or the edges, and accordingly the cut can be either a set of nodes, called a connectivity preserving node cut, or a set of edges, called a connectivity preserving edge cut.
Corresponding to the two types of cuts, the CPMC problem has two variants, {\em connectivity preserving minimum node cut (CPMNC)} and {\em connectivity preserving minimum edge cut (CPMEC)}.  

The CPMC problem has both theoretical and practical importance. Theoretically, it is closely related to three fundamental problems, minimum cut, set cover, and shortest path.
Practically, the CPMC problem finds applications in many different areas. In 
network security, for example,  CPMC can be used to identify potential nodes for attacking. In such applications, an attacker (or police) may want to intercept all communication (or traffic) between a source node $s_1$ and a destination node $t$. It is possible that some nodes  with (direct) connection to the destination might already have been compromised. To maximally utilize such nodes,  the attacker only needs to compromise another set of nodes with minimum cost so that all traffic between the source and destination nodes passes one of the compromised nodes. 
To solve this problem, one can formulate it as a CPMC problem in which the compromised nodes are treated as partners of the source after removing their connections to the destination. 
In applications related to network reliability and emergency recovery, a node in a network might be contaminated, and has to be separated from some critical nodes. Meanwhile,
traffic flows among the critical nodes have to be maintained with a minimum cost. To solve such a problem, one can treat the critical nodes as the source and partner nodes and the contaminated node as the destination node, and formulates it as a CPMC problem. In data mining, machine learning, and image segmentation, CPMC can be used to model clustering or segmentation problems with additional constrains for clustering or segmenting certain objects together.

The CPMC problem can be generalized in several ways. For example, we may have multiple pairs of source and destination nodes, and each source node may have multiple partner nodes. The simplest version is the 3-node case in which only one source node, one destination node, and one partner node exist. 
Note that the 3-node case  is much different from the minimum 3-terminal cut problem~\cite{Dahlhaus94} in which all three nodes are required to be separated, whereas in the 3-node case two nodes are required to be connected. 
In this paper, we will  mainly focus on the 3-node case.


The CPMC problem is in general quite challenging, even for the 3-node case. 
One of the main reasons is that the connectivity preserving requirement and the minimum cut requirement seem to be contradicting to each other.  As  it will be shown later,  the hardness of the CPMC problem  increases dramatically with the added connectivity requirement. This phenomenon (i.e., increased hardness with the additional connectivity constraint) is consistent with the observations by Yannakakis \cite{Yannakakis79} in several other graph related optimization problems.
%

The CPMC problem is a new and interesting problem. To the best of our knowledge, it has not been studied previously. Related problems include the non-separating cycle 
 and optimal cycle problems in certain surfaces~\cite{EN11,VL02}. Since there is no restriction on the source and its partner nodes, CPMC seems to be  more 
general and fundamental.
   
In this paper, we mainly consider CPMNC, CPMEC, and  CPMC in planar graphs. For the CPMNC problem, we show that the problem is extremely hard to solve and to approximate, even for some very special cases.
Particularly, we show that it cannot be approximated within a factor of  $\alpha logn$ for some small constant $\alpha$ unless P=NP. We also use Feige and Lovasz's  two-prover one round interactive  proof protocol~\cite{FL92} to show that the CPMNC problem cannot be approximated within any  $poly(logn)$ factor unless $NP\subset DTIME(n^{poly(logn)})$. The hardness results hold even for unit-weighted graphs and bipartite graphs.

For planar graphs, we show that the CPMNC problem can be solved in polynomial time if $s_{1}$ and $s_{2}$  are on the same face.
For the CPMEC problem, we present a polynomial time solution for general planar graphs, which can be used for CPMC applications in image processing and machine learning. 
We also reveal a close relation between a Location Constrained Shortest Path (LCSP) problem and the CPMEC problem in special planar graphs in which  $s_1$ and $t$ are in the same face, and give polynomial time solutions to both problems. 




 
 
 


\vspace{-0.15in}
\section{Connectivity Preserving Node Cut Problem}
\label{sec-nc}
\vspace{-0.1in}


First we note that the CPMNC problem is an NP optimization problem. To determine whether a valid cut exists, one just needs to check if $t$ is connected to any bridge node between $s_1$ and $s_2$; if so, then no valid cut exists. Clearly, this can be done in polynomial time. Thus, we assume thereafter that a cut always exists.

We first define the decision version of the CPMNC problem.

\begin{definition} [Decision Problem of CPMNC]
Given an undirected graph $G=(V,E)$ with each node $v_i \in V$ associated with a positive integer weight $c_i$, a source node $s_1$, a partner node $s_2$, a destination node $t$, and an integer $b>0$, determine whether there exists a subset of nodes in $V$ with total weight less than or equal to $b$ such that the removal of this subset disconnects $t$ from $s_1$ but preserves the connectivity between $s_1$ and $s_2$.
\end{definition}

The decision version of the CPMEC problem can be defined similarly.  

\begin{theorem}
\label{the-nc1}
The CPMNC problem is NP-complete
 and cannot be approximated within 
$\alpha_1logn$ for some constant $\alpha_1$ 
unless $P=NP$, where $n=|V|$.  
\end{theorem}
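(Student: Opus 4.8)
The plan is to prove both the NP-completeness and the logarithmic inapproximability of CPMNC by a single reduction from Set Cover (equivalently Dominating Set), which is the natural source problem because the inapproximability threshold $\alpha_1\log n$ matches the Raz–Safra / Feige lower bound for Set Cover, and because the "cover" structure of Set Cover maps cleanly onto the requirement that every $s_1$–$t$ path be hit.

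First I would set up the reduction. Given a Set Cover instance with universe $U=\{e_1,\dots,e_m\}$ and sets $S_1,\dots,S_k$, I build a graph $G$ as follows: create a source node $s_1$, a destination node $t$, a partner node $s_2$, one node $x_j$ for each set $S_j$, and one node $y_i$ for each element $e_i$. Connect $s_1$ to every set-node $x_j$; connect each $x_j$ to every element-node $y_i$ with $e_i\in S_j$; and connect every element-node $y_i$ to $t$. To enforce the connectivity-preserving constraint, attach $s_2$ to $s_1$ and to $t$ through a gadget (or directly add the edges $s_1 s_2$ and $s_2 t$), so that the only way to keep $s_1$ and $s_2$ connected while disconnecting $s_1$ from $t$ is to make all cut choices among the $x_j$ nodes — in particular, $t$, $s_1$, $s_2$, and all the $y_i$'s get weight $\infty$ (or a large polynomial weight, or are forbidden by making them high-degree via parallel copies so the reduction still works on unit-weight and bipartite graphs), while each $x_j$ gets weight $1$. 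Then a valid connectivity-preserving node cut of weight $\le b$ corresponds exactly to a sub-collection of $\le b$ sets $S_j$ whose removal kills every $s_1\!-\!x_j\!-\!y_i\!-\!t$ path, i.e. a set cover of size $\le b$: every element $e_i$ must be "covered" because otherwise $y_i$ still links $s_1$ to $t$ through some surviving $x_j$. One must also check the converse and that $s_1,s_2$ stay connected (they do, via the untouched $s_1 s_2 t$ path, since $s_2$ is connected to $t$, not cut off from it). I would then handle the stated refinements: to get a bipartite instance, observe the layered structure $s_1 \mid \{x_j\} \mid \{y_i\} \mid t$ is already bipartite up to the $s_2$ gadget, which can be subdivided; to get unit weights, replace each infinite-weight node by a bundle of $k+1$ parallel twin nodes (or subdivide appropriately) so no optimal solution ever cuts one.

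The correctness argument then gives a weight-preserving bijection between set covers and valid cuts, so: (i) CPMNC is NP-hard (membership in NP is immediate — guess the cut, verify in polynomial time that it disconnects $t$ from $s_1$ and leaves $s_1,s_2$ connected — and was essentially noted already in the excerpt), establishing NP-completeness; and (ii) any approximation algorithm for CPMNC with ratio $\rho$ yields a $\rho$-approximation for Set Cover. Invoking the known hardness of Set Cover (Raz–Safra, or Lund–Yannakakis) — it cannot be approximated within $c\log N$ for some constant $c$ unless $P=NP$, where $N$ is the instance size — and noting that $n=|V|=|G|$ is polynomially bounded in the Set Cover size (indeed linear, $n = \Theta(m+k)$, even after the twin-blowup it is $n=\mathrm{poly}(m+k)$), the $c\log N$ bound transfers to an $\alpha_1\log n$ bound for CPMNC for a suitable constant $\alpha_1$ (absorbing the polynomial blow-up into the constant, since $\log(\mathrm{poly}(N))=\Theta(\log N)$).

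The main obstacle I anticipate is the enforcement of the connectivity-preserving constraint in a way that does not accidentally create shortcuts: I must make sure that (a) the partner $s_2$ really does force the cut to live entirely inside the $\{x_j\}$ layer — i.e. that cutting $t$ itself, or cutting element-nodes, or cutting $s_2$'s neighbors, is never cheaper and never helps — and (b) that "preserve connectivity of $s_1$ and $s_2$" does not become vacuous or, worse, impossible. The cleanest fix is the infinite/large-weight trick on everything except the $\{x_j\}$, with $s_2$ hanging off a vertex that is NOT separated from $s_1$ by any $\{x_j\}$-cut; one must then double-check that after removing a set cover's worth of $x_j$'s, the path $s_1\to s_2 \to \dots$ survives, which it does by construction. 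Pushing this through the unit-weight and bipartite restrictions without breaking these invariants is the part requiring the most care, but it is routine once the weighted construction is verified; I would present the weighted version first and then remark on the modifications.
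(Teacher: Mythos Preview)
Your reduction has a genuine gap: the layered construction does not encode Set Cover. Consider first the $s_2$ gadget. If you add the edge $s_2 t$ (as you suggest) and give $s_1,s_2,t$ infinite weight, then the path $s_1\!-\!s_2\!-\!t$ cannot be broken by any finite-weight node cut, so no feasible cut exists at all; your remark that ``$s_1,s_2$ stay connected \ldots\ via the untouched $s_1 s_2 t$ path'' is precisely the problem, since that same path keeps $s_1$ connected to $t$. If instead you attach $s_2$ only to $s_1$, the connectivity constraint becomes vacuous and you are left with an ordinary minimum $s_1$--$t$ node cut. But in your layered graph the only finite-weight nodes are the $x_j$'s, and every $x_j$ with $S_j\neq\emptyset$ lies on a length-three $s_1$--$t$ path $s_1\!-\!x_j\!-\!y_i\!-\!t$ that cannot be broken any other way; hence the (unique) minimum cut is \emph{all} of the $x_j$'s, of size $k$, independent of the Set Cover structure. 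Your sentence ``every element $e_i$ must be covered because otherwise $y_i$ still links $s_1$ to $t$ through some surviving $x_j$'' has the quantifier backwards: to disconnect $y_i$ from $s_1$ you must remove \emph{every} $x_j$ with $e_i\in S_j$, not just one, so a set cover does not suffice.

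The paper's reduction exploits the connectivity constraint in an essential way that yours does not. It strings the \emph{elements} in series between $s_1$ and $s_2$: for each element $e_i$ there is a parallel bundle of nodes, one per set containing $e_i$, and these bundles are concatenated into a chain with $s_1,s_2$ at the ends. Preserving $s_1$--$s_2$ connectivity forces at least one node in each bundle to survive; that surviving node is wired (through a heavy ``set node'') to $t$, so the corresponding set node must be cut. Thus the cut set nodes form a set cover, and weights are scaled so that the set-node portion of the cut dominates. The crucial idea you are missing is that the $s_1$--$s_2$ constraint should be what \emph{creates} the covering requirement, not a side condition to be neutralised.
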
 

\begin{proof}
To prove the theorem, we reduce the set cover problem
to this problem. In the set cover  problem, we have a ground set $\mathcal{T}=\{e_1, e_2, \ldots, e_{n_1} \}$ of $n_1$ elements, and a set  $\mathcal{S}= \{S_1, S_2, \ldots, S_k \}$ of $k$ subsets of $\mathcal{T}$ with each $S_i \in \mathcal{S}$ associated with a weight $w_i$. The objective is to select a set $\mathcal{O}$ of subsets in $\mathcal{S}$ so that the union of all subsets in $\mathcal{O}$ contains every element in $\mathcal{T}$ and the total weight of subsets in $\mathcal{O}$ is minimized. 

Given an instance $I$ of the set cover problem with $n_1$ elements and $k$ sets, we construct a new graph. The new graph has an element gadget for every element, 
 and every element gadget contains $k_1+2$
nodes, where $k_1$ is the number
of sets that contains this element.
 In every gadget, there are two end
points, and $k_1$ internal
nodes  are connected to the two end nodes
in parallel. Every internal  nodes
of a gadget corresponds to a set that
contains this element.
 All such $n_1$
gadgets are connected sequentially
through their end points,
with $s_1$ and $s_2$ at the two ends
of the whole construction.     
All nodes correspond to the same
set are connected to a new node which we
call set node, and all set nodes are connected
to $t$.
Figure ~\ref{fig:set}
is the graph constructed for set cover
instance with three elements $x_1$, $x_2$, and
$x_3$, three sets $A_1=\{ x_1,x_3\}$,  $A_2=\{ x_2,x_3\}$,
and   $A_3=\{ x_1,x_2\}$. 

 \begin{figure}
\centering
\includegraphics[height=3in]{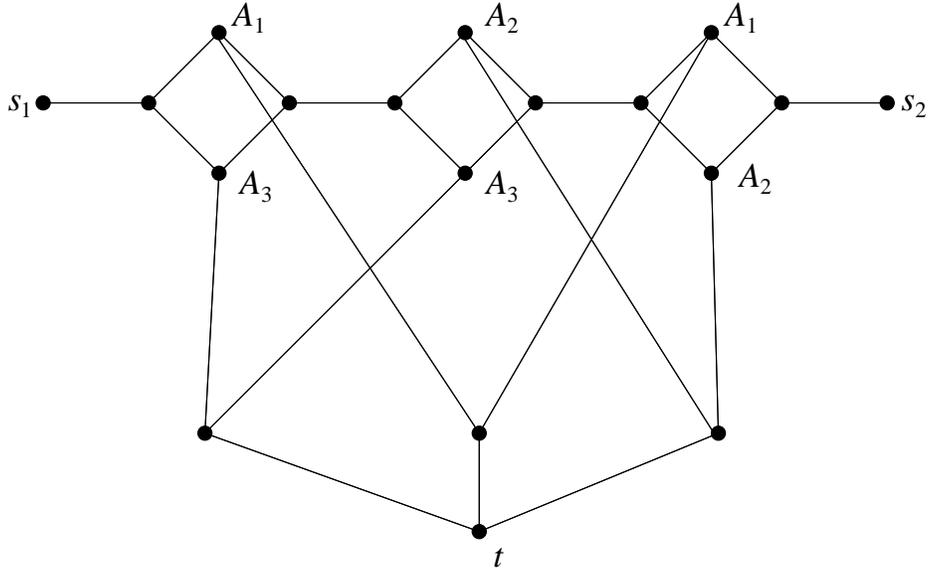}
\caption{ An example illustrating Theorem \ref{the-nc1}.}
\label{fig:set}
\end{figure}

\begin{figure}
\centering
\includegraphics[height=3in]{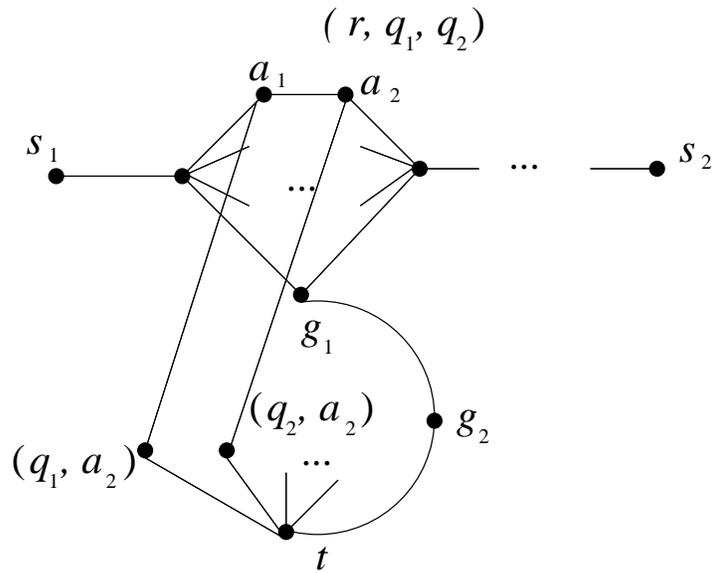}
\caption{ An example illustrating Theorem \ref{the-nc2}.}
\label{fig:mip}
\end{figure}

%


 Every set node is assigned a weight
$w_in_1k$, where $w_i$ is the weight of the
corresponding set in the original set cover instance.     
All other nodes are assigned 
weight $1$. We  let $b=n_1kD_1 + n_1k-1$,
where $D_1$ is the upper bound of weight
in the set cover instance. Note that one cannot
put all nodes into the cut  in an element
gadget, otherwise $s_1$ and $s_2$
will be separated. Now we can see that if the set
cover instance has a cover with weight
no more $D_1$, then we can choose the
following cut:  The cut contains those
set nodes contained in the  cover and 
all the gadget nodes which are not
in the set cover. The cut has a weight
 $n_1kD_1 + g_1$,  where $g_1<n_1k$. Similarly
if we can find a cut with weight no more than
  $n_1kD_1 + n_1k-1$, then we can find a corresponding
set cover with weight no more than $D_1$.
Furthermore, since set cover cannot
be approximated within $\alpha logn$ 
for some constant $\alpha$ unless NP=P~\cite{RS97,Feige98},
we can see that the connectivity preserving minimum cut problem
cannot be approximated within  $\alpha_1logn$
for some constant $\alpha_1$ unless NP=P.
Suppose the optimal solution of the set
cover instance is $D$, then the optimal solution
of the constructed graph has a minimum cut
with weight $n_1kD+g_2$, where $0<g_2<n_1k$.
If we can find a cut
in which the total weight (in the set cover instance) of  all set nodes
is $D_1$, then the cut has a weight  $n_1kD_1 + g_1$,
  where $0<g_1<n_1k$.
Assume   
$ \frac{n_1kD_1+g_2}{n_1kD + g_1} <\alpha_1 log(n_1k), $
for some $\alpha_1$,
 then we have
 $ \frac{D_1}{D} <\frac{n_1kD_1+g_2}{n_1kD + g_1} 
+ o(1) < \alpha_1log(n_1k).$

For the set cover problem with $n_1$ elements
and $k=poly(n_1)$ sets, it cannot be approximated
 within $\alpha logn_1$ unless 
NP=P~\cite{RS97,Feige98}. Since $k$ is bounded by some polynomial
in $n_1$,
 we can see  
   $ \frac{D_1}{D}  < \alpha_1log(n_1k)
  \leq \alpha_1\alpha_2logn_1 ,$
  where $\alpha_2$ is another constant. If we choose
$\alpha_1 \leq  \alpha/\alpha_2$, then 
    $ \frac{D_1}{D} \leq \alpha logn. $
Now we have a contradiction, which means that the problem
cannot be approximated within 
$\alpha logn$ unless 
NP=P.
 \end{proof}
 \vspace{-0.1in}

The above theorem holds for general graphs. For special graphs, we have the following corollaries.

 \vspace{-0.05in}
\begin{corollary}
\label{cor-unitweight}
The CPMNC problem is NP-complete 
and cannot be approximated within $\alpha logn$
for some constant $\alpha < 1$ unless
NP=P even if  the graph is unit-weighted.
\end{corollary}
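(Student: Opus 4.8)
The plan is to rerun the reduction of Theorem~\ref{the-nc1} essentially verbatim, but starting from the \emph{unweighted} set cover problem (all $w_i=1$) and eliminating the only source of non-unit weights in that construction --- the set nodes --- via the standard trick of splitting a weight-$w$ node into $w$ parallel unit-weight copies. Concretely, in the graph built in Theorem~\ref{the-nc1} every set node $S_i$ has weight $w_i n_1 k = n_1 k$; I would delete $S_i$ and instead insert $N := n_1 k$ new unit-weight nodes $S_i^{(1)},\dots,S_i^{(N)}$, each adjacent to exactly those internal gadget nodes $S_i$ was adjacent to, and to $t$. Every other node already has unit weight and is left unchanged, and, as is standard for node-cut problems, the terminals $s_1,s_2,t$ are not deletable. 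The resulting instance is unit-weighted and has size polynomial in that of the original.

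Next I would check that the set-cover$\,\leftrightarrow\,$cut correspondence of Theorem~\ref{the-nc1} survives the surgery. Every $s_1$--$t$ path still reaches $t$ through a final segment $v\to S_i^{(r)}\to t$ with $v$ an internal node of some element gadget, so blocking all paths that use set $S_i$ now costs exactly $N$ (every copy $S_i^{(r)}$ must be removed) --- precisely the cost of ``using $S_i$'' in a cover. The two directions then run as before: (i) from a cover $\mathcal{O}$, delete every copy of each $S_i$ with $i\in\mathcal{O}$ and, in every gadget, all internal nodes but one whose set lies in $\mathcal{O}$; this is a valid cut (one surviving internal node per gadget preserves $s_1$--$s_2$ connectivity) of weight $N|\mathcal{O}|+C$, where $C=\sum_i|S_i|-n_1<n_1 k=N$ is a constant of the instance; (ii) from any valid cut $X$ of weight $W$, the set $\mathcal{O}'$ of indices $i$ all of whose $N$ copies lie in $X$ is a cover of size $\le W/N$, because in every gadget some internal node survives and the path through it forces all copies of the corresponding set node into $X$. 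Hence $\mathrm{OPT}_{\mathrm{CPMNC}}=N\cdot\mathrm{OPT}_{\mathrm{SC}}+g$ with $0\le g<N$; taking $b=ND+N-1$ for the decision version gives a correct reduction, hence (with membership in NP noted at the start of Section~\ref{sec-nc}) NP-completeness, and the same $o(1)$-type estimate as in Theorem~\ref{the-nc1} turns a $\rho$-approximation for CPMNC into a $(1+o(1))\rho$-approximation for unweighted set cover.

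Finally I would count vertices and extract the constant: the new graph has $n=O(n_1 k^2)$ nodes, polynomial in $n_1$ whenever $k=\mathrm{poly}(n_1)$, so $\log n=\Theta(\log n_1)$, say $\log n\le c\log n_1$ with $c\ge 1$ (and $c>1$ for the relevant, polynomially-many-sets instances); combined with the non-approximability of unweighted set cover within $c_0\log n_1$ --- the constant of \cite{RS97}, or $(1-o(1))\ln n_1$ of \cite{Feige98}, with $c_0<1$ in any case --- under $\mathrm{P}\ne\mathrm{NP}$, this gives that CPMNC has no $(c_0/c)\log n$-approximation, so the factor is $\alpha\log n$ with $\alpha=c_0/c<1$. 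I expect the main obstacle to be purely in justifying the bundle gadget: one must argue that a cut never gains anything by deleting a proper nonempty subset of a bundle (any such deletion is simply wasted, so the lower bound $W\ge N|\mathcal{O}'|$ is untouched) and that the residual ``gadget cost'' $C$ of the optimum is strictly dominated by a single unit $N$ of the cover cost --- which is exactly why the multiplicity must be taken as large as $n_1 k$. The remaining bookkeeping, that replacing weights by polynomially many parallel nodes keeps $\log n=\Theta(\log n_1)$ and hence does not wash out the $\Omega(\log n)$ gap, is routine, and it is precisely this $\mathrm{poly}(n_1)$ blow-up that pushes the achievable constant below $1$.
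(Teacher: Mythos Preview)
Your proposal is correct and follows essentially the same route as the paper: both arguments observe that the only non-unit weights in the Theorem~\ref{the-nc1} construction are the set nodes, and replace each such node of weight $c$ by $c$ unit-weight nodes wired to the same neighbours, so that cutting ``the set'' still costs exactly $c$. The paper uses a clique on the $c$ new nodes (each clique vertex adjacent to all old neighbours), whereas you use an independent set of $c$ parallel copies; the clique makes the ``all-or-nothing'' behaviour automatic, while your variant needs the one-line remark you already give (a proper nonempty subset of a bundle is wasted), but otherwise the vertex count, the $\mathrm{OPT}_{\mathrm{CPMNC}}=N\cdot\mathrm{OPT}_{\mathrm{SC}}+g$ identity with $0\le g<N$, and the $\log n=\Theta(\log n_1)$ bookkeeping are identical.
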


\begin{proof}
Note that in the above reduction, only those set nodes have
weight more than 1. We can change every such node $v$ with weight $c>1$ to a clique $Q_v$ of $c$ nodes, and 
connect each node $u$, which is originally connected to $v$ in the old graph, to each node in $Q_v$ in the new graph. 
Then the resulting graph is unit-weighted. Note that if one want to cut a set node in the
original graph, one must cut all the 
corresponding clique nodes in the new graph
to make the cut minimum.  So all the arguments in the proof of Theorem \ref{the-nc1}
still hold.   
\end{proof}

 \vspace{-0.15in}
 \begin{corollary}
 \label{cor-bipartite}
The CPMNC problem is NP-complete 
and cannot be approximated within $\alpha logn$
for some constant $\alpha < 1$ unless
P=NP  even if the graph is bipartite.
\end{corollary}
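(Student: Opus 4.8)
The plan is to reuse the reduction from Theorem~\ref{the-nc1} and turn the graph it constructs into a bipartite one by subdividing every edge, then argue that this modification leaves the optimal connectivity preserving node cut value unchanged, so that all of the conclusions of Theorem~\ref{the-nc1} carry over.

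First I would take the graph $G$ that is built from a set cover instance $I$ in the proof of Theorem~\ref{the-nc1}, and replace each edge $(u,v)$ of $G$ by a new node $x_{uv}$ together with the two edges $(u,x_{uv})$ and $(x_{uv},v)$; call the resulting graph $G'$. A cycle of $G$ of length $\ell$ becomes a cycle of $G'$ of length $2\ell$, so $G'$ has no odd cycle and is therefore bipartite. I would keep the source, partner and destination nodes $s_1,s_2,t$, keep the weight of every original node and the budget $b=n_1kD_1+n_1k-1$ as in Theorem~\ref{the-nc1}, and give each subdivision node $x_{uv}$ a weight exceeding $b$, say $b+1$. The construction is clearly polynomial time, since $|V(G')|=|V(G)|+|E(G)|$.

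Next I would establish that $G'$ has exactly the same optimal connectivity preserving node cut value as $G$. Contracting the subdivision nodes gives a bijection between walks of $G'$ with original endpoints and walks of $G$, so for any set $C$ of non-terminal original nodes, $s_1$ is connected to $s_2$ (respectively to $t$) in $G'\setminus C$ if and only if it is in $G\setminus C$; hence such a $C$ is a feasible cut of $G'$ exactly when it is a feasible cut of $G$, and with the same weight. On the other hand, any cut of $G'$ of total weight at most $b$ contains no subdivision node and is therefore a set of original nodes, so the two instances admit a connectivity preserving cut of weight at most $b$ under exactly the same condition, and their optimal cut weights coincide. By Theorem~\ref{the-nc1} this condition is that $I$ has a set cover of weight at most $D_1$, and the optimal cut of $G'$ has weight $n_1kD+g_2$ with $0<g_2<n_1k$ where $D$ is the optimum of $I$. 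NP-completeness of CPMNC on bipartite graphs then follows (membership in NP being immediate); and since $|V(G')|$ is polynomial in $n:=|V(G)|$, the $\log$-factor inapproximability of set cover with polynomially many sets propagates through this reduction, exactly as in the last part of the proof of Theorem~\ref{the-nc1}, and shows that bipartite CPMNC cannot be approximated within $\alpha\log n$ for some constant $\alpha<1$ unless P=NP.

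The only real point to watch, and the main (if modest) obstacle, is to make sure that passing to the bipartite graph does not introduce a cheaper feasible cut; this is precisely what the large weight on the subdivision nodes prevents, after which the entire budget and set cover correspondence of Theorem~\ref{the-nc1} is untouched. If in addition one wanted the bipartite instance to be unit-weighted, one could instead subdivide the construction of Corollary~\ref{cor-unitweight} and replace each subdivision node by enough parallel length-two paths to make the corresponding connection effectively uncuttable while keeping the graph bipartite; but this strengthening is not needed for the statement at hand.
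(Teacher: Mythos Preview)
Your approach is correct and takes a genuinely different route from the paper. The paper does not subdivide at all: it simply observes that the graph built in Theorem~\ref{the-nc1} is essentially bipartite already---after merging each pair of endpoint nodes where consecutive element gadgets meet into a single node, the vertices split into two colour classes (gadget endpoints and set nodes on one side, internal gadget nodes and $t$ on the other), and the reduction is otherwise untouched. Your edge-subdivision argument is less economical for this particular construction but has the virtue of being completely generic: it would turn \emph{any} hardness reduction for CPMNC into one on bipartite graphs.

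One small point to tighten: for the inapproximability part you need $\mathrm{OPT}(G')=\mathrm{OPT}(G)$ unconditionally, whereas your argument with subdivision weight $b+1$ only yields this when $\mathrm{OPT}(G)\le b$. Since $b$ in Theorem~\ref{the-nc1} is tied to the decision threshold $D_1$ rather than to the instance's actual optimum, this can fail on hard instances. The clean fix is to give each subdivision node weight exceeding the total weight of $G$, say $1+\sum_{v\in V(G)}c_v$; then no optimal cut of $G'$ can contain a subdivision node, the two optima coincide in all cases, and the rest of your argument goes through verbatim.
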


\begin{proof}
From the construction in the proof of Theorem \ref{the-nc1}, it is easy to see that 
if we shrink the two nodes connecting the neighboring
gadgets, the nodes in the graph can be partitioned
into two sets such that there is no edge in each set (i.e., the resulting graph is bipartite).
Thus the corollary is true for  bipartite  graphs. 
\end{proof}

 
\vspace{-5pt}

Next we show that the problem cannot
be approximated within any $poly(logn)$
ratio unless $NP \subseteq  DTIME(n^{poly(logn)})$.

\begin{theorem}
\label{the-nc2}
The CPMNC problem cannot be approximated within
a ratio of $log^kn$, for any positive $k$, unless $NP\subset DTIME(n^{poly(logn)})$.
\end{theorem}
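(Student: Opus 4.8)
The plan is to mimic the reduction of Theorem \ref{the-nc1}, but to start from a problem that is harder to approximate than set cover under the stronger complexity assumption, namely from the label cover / MAX-SNP-style gap produced by Feige and Lov\'asz's two-prover one-round interactive proof protocol~\cite{FL92}, as already announced in the introduction. Recall that this protocol yields, for any language $L\in NP$ and any $k$, a polynomial-time transformation of an instance $x$ into a set-cover-type instance (equivalently a covering/labeling instance) of size $n^{O(\mathrm{poly}\log n)}$ such that the optimum cover has one value when $x\in L$ and a value larger by a factor $\Omega(\log^k n)$ when $x\notin L$; distinguishing the two cases in polynomial time would place $NP$ in $DTIME(n^{\mathrm{poly}\log n)}$. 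So the first step is to invoke this protocol to obtain, from an arbitrary $NP$ instance, a set cover instance with a $\log^k$ inapproximability gap that holds under the quasi-polynomial-time assumption, with $n_1$ elements and $k'=n_1^{O(\mathrm{poly}\log n_1)}$ sets.

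Next I would feed this set cover instance into exactly the gadget construction of Theorem \ref{the-nc1}: an element gadget of $k_1+2$ nodes in parallel for each element, the gadgets chained in series between $s_1$ and $s_2$, a set node for each set connected to all of its element-copies and to $t$, set-node weights scaled by $n_1 k'$ and all other weights $1$, with the same threshold shift $b=n_1 k' D_1 + n_1 k' -1$. The key quantitative point, carried over verbatim from the proof of Theorem \ref{the-nc1}, is that a cut of weight $n_1 k' D_1 + g$ with $0<g<n_1 k'$ corresponds to a set cover of weight $D_1$, so that the additive ``slack'' $g$ is strictly smaller than the scaling factor $n_1 k'$; hence if $D$ is the set-cover optimum then the CPMNC optimum is $n_1 k' D + g_2$ with $0<g_2<n_1 k'$, and any multiplicative approximation ratio $\rho$ for CPMNC yields, up to an additive $o(1)$, a ratio $\rho$ for set cover: $\tfrac{D_1}{D} < \tfrac{n_1 k' D_1 + g_2}{n_1 k' D + g_1} + o(1)$. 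So a $\log^k n$ approximation for CPMNC (where $n$ is the number of nodes of the constructed graph, which is polynomial in $n_1 k'$, hence $\log n = \Theta(\log(n_1 k'))$) would give a $\Theta(\log^k(n_1 k'))$ approximation of the set cover instance, contradicting the Feige--Lov\'asz gap unless $NP\subset DTIME(n^{\mathrm{poly}\log n})$.

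The main obstacle, and the step deserving the most care, is bookkeeping the size blow-up: since the Feige--Lov\'asz instance already has $k'$ super-polynomial (quasi-polynomial) in $n_1$, I must check that the CPMNC graph still has only $n = (n_1 k')^{O(1)}$ nodes, so that $\log n = \Theta(\log(n_1 k'))$ and the claimed $\log^k n$ bound on the CPMNC side really translates into a $\mathrm{poly}\log$ bound on the covering side that is ruled out by the protocol; one also has to confirm that the whole reduction runs in time polynomial in $n$, i.e.\ quasi-polynomial in $n_1$, which is exactly the regime the conclusion $NP\subset DTIME(n^{\mathrm{poly}\log n})$ tolerates. A secondary point is to re-verify that the ``one cannot cut all $k_1$ internal nodes of a gadget, else $s_1,s_2$ are separated'' argument and the weight-scaling argument are unaffected by allowing $k_1$ (and $k'$) to be super-polynomial; they are, since those arguments are purely combinatorial and weight-based and do not use polynomial bounds on the instance. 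Finally, if one wants the hardness to persist for unit-weighted or bipartite graphs, apply the clique-replacement of Corollary \ref{cor-unitweight} or the node-splitting of Corollary \ref{cor-bipartite}, again only increasing the size polynomially.
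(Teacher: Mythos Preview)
Your proposal has a genuine gap that the paper's own proof avoids. The flaw is the choice of set cover as the intermediate problem. Set cover admits a greedy $O(\log N)$ approximation, where $N$ is the size of the set-cover instance; consequently no $\log^k N$ gap with $k>1$ can exist for set cover, regardless of what complexity assumption one is willing to make. Concretely, the Lund--Yannakakis/Feige--Lov\'asz machinery applied to set cover yields a gap that is $\Theta(\log N)$ in the \emph{set-cover} instance size $N$: whatever polynomial-in-the-SAT-input gap the $MIP(2,1)$ protocol produces is diluted by the size blow-up of the set-cover construction down to a single logarithm. If you then plug this instance into the gadget of Theorem~\ref{the-nc1}, the CPMNC graph has size $n=\Theta(N^{O(1)})$ and the approximation gap you can certify for CPMNC is at most $O(\log N)=O(\log n)$ --- i.e., you recover Theorem~\ref{the-nc1}, not Theorem~\ref{the-nc2}. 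Your parenthetical ``(equivalently a covering/labeling instance)'' hints at label cover, which \emph{does} have super-polylogarithmic hardness, but the gadget of Theorem~\ref{the-nc1} is tailored to set cover and you give no indication of how to adapt it to the projection-constraint structure of label cover.

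The paper gets around this by \emph{not} passing through set cover at all. It builds the CPMNC instance directly from the $MIP(2,1)$ protocol: one question gadget $(r,q_1,q_2)$ per random seed $r$, with the valid answer pairs $(a_1,a_2)$ placed in parallel inside the gadget, plus separate ``answer nodes'' $(q_i,a_i)$ of weight $N$ wired to $t$ and to all occurrences of $a_i$ in gadgets with question $q_i$. The analysis then shows that if the SAT instance of size $n$ is satisfiable the cut has weight $\le N(|Q_1|+|Q_2|+1)$, while if it is unsatisfiable any valid cut has weight $\ge n^{\epsilon}N(|Q_1|+|Q_2|+1)$ (via a probabilistic argument bounding the fraction of seeds with few selected answers). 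Since the constructed graph has $N=2^{\mathrm{poly}(\log n)}$ nodes, the gap $n^{\epsilon}$ exceeds $\log^k N$ for every fixed $k$. The point is that this direct construction preserves the $n^{\epsilon}$ gap of the protocol, whereas routing through set cover would cap it at a single logarithm.
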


The proof is based on Feige and Lovasz's 
two-prover one round interactive 
proof protocol~\cite{FL92} 
(abbreviated as $MIP(2,1)$) and
 Lund and Yannakakis's result on
the hardness of set cover~\cite{LY94}.
$MIP(2,1)$ consists of two provers $P_1$, $P_2$
and one verifier $V$. $Q_1$ and $Q_2$ are
sets of  possible
questions for $P_1$ and $P_2$, $A_1$ and $A_2$ are
 sets  of possible answers from $P_1$ and
$P_2$,  $\Sigma $ is the set of input
alphabet, and $R$ is a set of random seeds. The verifier
first computes a (polynomial time computable) function $f:\, \Sigma^n\times R \to Q_1\times Q_2$, to generate two
questions for $P_1$ and $P_2$. After receiving
the answers, $V$ computes a boolean predicate (also polynomial time computable)
on $ \Sigma^n\times R \times A_1\times A_2$ to decide acceptance or rejection. Notice that in the
protocol, the two provers can only agree with each other on some strategy pre-hand, and once
the execution of the protocol begins,
the two protocols can no longer communicate. This means that $P_1$ (or $P_2$) cannot see
the question for $P_2$ (or $P_{1}$), and the answer
from $P_2$ (or $P_{1}$). To achieve this securely, an oblivious protocol~\cite{Killian88} can
be used.

 The $MIP(2,1)$ protocol for NP has the following properties. 
 \begin{itemize}
\item  If the input $SAT$ instance $\phi$ is satisfiable,
then the provers always have a strategy to make the
verifier to accept.
\item If the input $SAT$ instance $\phi$ is not satisfiable,
then no matter what strategy the  provers use, the
verifier will accept with probability at most
 $1/n$, where $n$ is the input size.

\item All messages transferred in the 
protocol have length bounded by a 
polylog function. Also in the protocol, given an input 
instance, a random seed $r \in R$, and answer $a_1$ from $P_1$, 
there is a unique valid answer $a_2$ that the verifier will accept.
Additionally, in the construction, $ \mid Q_1 \mid = \mid Q_2 \mid $,
and for every $q_1 \in  Q_1$, there is
an equal number $ \mid R\mid /\mid Q_1\mid $ of $r$ that will generates
$q_1$. This is also the case for 
every $q_2 \in  Q_2$.
\end{itemize}

\vspace{-0.1in}
We now prove Theorem \ref{the-nc2}. 
\vspace{-0.1in}
\begin{proof} 
To prove the theorem, we first construct a graph.
Given a $SAT$ instance $\phi $, the graph will have a valid cut with weight
 at most $N  (\mid Q_1 \mid+ \mid Q_2\mid +1)$ if $\phi$ can be 
satisfied, where $N=2^{poly(logn)}$ is the total number of nodes
in the constructed graph.
If $\phi$ cannot be
satisfied, a valid cut will have weight at least
$n^\epsilon N(\mid Q_1 \mid + \mid Q_2 \mid +1) $, for some constant
$\epsilon$ and $0<\epsilon < 1$.

The graph has $\mid R\mid $ question gadgets ( $\mid R\mid $  also
has size $poly(logn)$~\cite{LY94} )
) with each denoted as 
$(r,q_1,q_2)$. Every $r\in R$ has a question gadget and a corresponding
question pair $(q_1,q_2)$.
We put every possible valid answer pair $(a_1,a_2)$
as two nodes in the gadget. All such 
answer pairs are put in the gadget
in parallel (see Figure \ref{fig:mip}). The gadget also
has a backdoor node $g_1$.
This backdoor node is  connected
to  $t$ through an intermediate node 
$g_2$,  which has a very large weight, say,
$n N(\mid Q_1 \mid + \mid Q_2 \mid +1)$.
 All nodes in a gadget have weight $1$. We also have $ \mid Q_1\times   A_1 \mid  + \mid Q_2\times  A_2 \mid $ answer
nodes with each of them associated with weight $N$.  Every answer
node $(q_1,a_1)$ (or $(q_2,a_2)$ ) is connected
to the gadget node $a_1$ (or $a_2$) if
the gadget is $(r,q_1,q_2)$.
Note that an answer node may be connected
to multiple gadget nodes. Finally, every
answer node is connected to $t$, and the
gadgets are connected sequentially with
$s_1$ and $s_2$ at two ends (see Figure
~\ref{fig:mip}).
 
%

Let  $c(q_1)$ (or $c(q_2)$) be the number of nodes selected in a cut from all those answer nodes corresponding to the same question 
$q_1$ (or $q_2$). If $\phi$ can be satisfied, then we can find
a cut with weight at most $N(\mid Q_1\mid  + \mid Q_2 \mid + 1)$ .
This is because the prover $P_1$ (or $P_2$) can have
a valid answer $a_1$ ( or $a_2$) for any question $q_1$ (or $q_2$), and
if we choose these $(q_1,a_1)$ and $(q_2,a_2)$ ($ \mid Q_1 \mid + \mid Q_2 \mid $ nodes in
total) answer nodes in the cut, we can have
a valid cut with weight at most  $N( \mid Q_1\mid +\mid Q_2 \mid +1 )$. 

If $\phi$ cannot be satisfied, we then have two cases to consider.

{\bf Case 1:} For some valid $(r,q_1,q_2)$,
there is no valid answer pair $(a_1,a_2)$. In this
case, to find a valid cut, one must choose 
node $g_2$ in the $(r,q_1,q_2)$ gadget which has weight  $nN( \mid Q_1\mid +\mid Q_2 \mid +1 )$. 

{\bf Case 2:} For every valid  $(r,q_1,q_2)$, 
there always exists at least one valid answer pair $(a_1, a_2)$.
In this case, 
let $p$ be percentage of those $r$ whose corresponding $q_1$ and $q_2$ have $c(q_1)+c(q_2) \le n^{\epsilon_1}$, where $0<\epsilon_1<1/2$ is a small positive real number.  Then we have $p< (n^{2\epsilon_1-1}) $.
 To see this, suppose $p> (n^{2\epsilon_1-1}) $. Then for a question node $q_1$, prover $P_1$ can randomly 
select one of the answers $a_1$ such that $(q_1,a_1)$
is in the cut, prover $P_2$ can randomly  select
one of answers $a_2$ such that $(q_2,a_2)$
is in the cut, and  $(a_1,a_2)$ is a valid answer
with probability at least $n^{-2\epsilon_1}$. Thus the total
probability that the provers  will
get a valid answer is 
at least $ \frac{p}{n^{2\epsilon_1}} $,
which is greater than $n^{-1}$, a
contradiction. 
Hence we have
 $ \displaystyle\Sigma_rc(r) \geq (1-n^{2\epsilon_1-1} )\mid R \mid  n^{\epsilon_1},$
where $c(r)= c(q_1)+c(q_2)$  and $q_1$ and $q_2$ are the
 queries corresponding to seed $r$.  From this, we immediately have
 $ \displaystyle\Sigma_rc(r) =  \displaystyle\Sigma_r(c(q_1) + c(q_2)) 
   = \frac{\mid R \mid }{\mid Q_1\mid} 
         (\displaystyle(\Sigma_{q_1\in Q_1}c(q_1) +(\Sigma_{q_2\in Q_2}c(q_2) ) 
  =   \frac{\mid R \mid }{\mid Q_1\mid} C,$
  where $C$ is the total number of answer nodes in the cut. Combining the above two, we have 
   $\frac{\mid R \mid }{\mid Q_1\mid} C  \geq (1-n^{2\epsilon_1-1} )\mid R \mid  n^{\epsilon_1}. $ 
   This implies that 
   $C \geq\mid Q_1\mid (1-n^{2\epsilon_1-1} ) n^{\epsilon_1} > n^{\epsilon }(\mid Q_1 \mid +\mid Q_2 \mid +1 ),$
 where $0<\epsilon<\epsilon_1$ (Note that here we use the fact that $\mid Q_1 \mid = \mid Q_2 \mid$) .

Thus the total weight of the cut will be  larger than $n^\epsilon N(\mid Q_1 \mid +\mid Q_2 \mid +1 )$.
Note that $N=2^{poly(logn)}$, and for any positive number $k$,
 $n^\epsilon>log^kN$ for all sufficiently
large $n$. This proves our assertion.
\end{proof}
\vspace{-0.05in}

%


\section{ CPMC in Planar Graphs}
\label{sec-planar}
\vspace{-0.1in}

In this section we present polynomial time solutions to CPMEC in planar graphs and  CPMNC in some special planar graphs.

\vspace{-0.05in}
\begin{theorem}
\label{the-planar2}
If the graph $G$ is planar and the source node
$s_1$ and the partner node $s_2$ are  in the same face,
then the CPMNC problem can be solved in polynomial
time.
\end{theorem}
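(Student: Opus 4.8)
The plan is to turn the common-face hypothesis into a decomposition of the problem into two ordinary minimum vertex-cut computations. Fix a planar embedding of $G$ with $F$ as the outer face; I would first reduce to the case where $G$ is $2$-connected (the general case needs only minor changes, replacing $\partial F$ by the boundary walk of $F$), so that the outer boundary $\partial F$ is a simple cycle through $s_1$ and $s_2$. These two vertices cut $\partial F$ into two arcs $A$ and $B$, each a simple $s_1$--$s_2$ path of $G$ with $V(A)\cap V(B)=\{s_1,s_2\}$ and $V(A)\cup V(B)=V(\partial F)$. All of $G$ except the open face $F$ is drawn inside the closed disc $\Delta$ bounded by $\partial F$.

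The structural heart of the argument is the claim that every minimum-weight valid cut $X$ avoids $V(A)$ entirely or avoids $V(B)$ entirely. I would prove this by contradiction. Suppose $a\in X\cap V(A)$ and $b\in X\cap V(B)$, necessarily with $a,b\notin\{s_1,s_2\}$ since $s_1,s_2\notin X$. A minimum-weight valid cut is inclusion-minimal, so, since deleting a vertex cannot disconnect $s_1$ from $s_2$, deleting $a$ (resp.\ $b$) reconnects $s_1$ to $t$; hence $a$ and $b$ each have a neighbour in the component $C_t$ of $G-X$ that contains $t$. As $C_t$ is connected, there is a simple $a$--$b$ path $\rho$ whose internal vertices all lie in $C_t$; since the vertices of any witness $s_1$--$s_2$ path $\pi$ of $G-X$ lie in the component of $s_1$, which is distinct from $C_t$ and contains neither $a$ nor $b$, the paths $\rho$ and $\pi$ are vertex-disjoint. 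Trimming $\rho$ to the segment between its last vertex on $A$ and the first subsequent vertex on $B$ yields a simple path $\rho^\star$ from some $a^\star\in V(A)\setminus\{s_1,s_2\}$ to some $b^\star\in V(B)\setminus\{s_1,s_2\}$ with no internal vertex on $\partial F$; thus $\rho^\star$ is a chord of $\Delta$, and because $\{a^\star,b^\star\}$ separates $s_1$ from $s_2$ along the cycle $\partial F$, this chord separates $s_1$ from $s_2$ inside $\Delta$. But $\pi$ is an $s_1$--$s_2$ path drawn inside $\Delta$ that is vertex-disjoint from $\rho^\star$, so in the embedding its drawing never meets the chord -- contradicting the Jordan curve theorem. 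I expect this topological step, together with the degenerate cases it glosses over (a non-$2$-connected boundary, $t$ or a witness vertex sitting on $\partial F$), to be the main obstacle; note that planarity is used only here.

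Granting the claim, the optimum equals $\min(\mathrm{OPT}_A,\mathrm{OPT}_B)$, where $\mathrm{OPT}_Y$ for $Y\in\{A,B\}$ is the minimum weight of a valid cut disjoint from $V(Y)$. The next step is the observation that a vertex set $X$ with $X\cap V(Y)=\varnothing$ is a valid cut if and only if $X$ disconnects $t$ from the whole set $V(Y)$ in $G$: if $X$ misses $V(Y)$ then the arc $Y$ survives in $G-X$ as an $s_1$--$s_2$ path, so connectivity of $s_1,s_2$ is preserved, and ``$X$ separates $s_1$ from $t$'' becomes equivalent to ``$X$ separates $V(Y)$ from $t$'' because $V(Y)$ lies in a single component of $G-X$. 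Hence $\mathrm{OPT}_Y$ is the minimum weight of a $t$--$V(Y)$ vertex cut that avoids $V(Y)$; contracting $V(Y)$ into a single non-removable terminal $\alpha_Y$ turns this into an ordinary minimum-weight $\alpha_Y$--$t$ vertex cut, computable in polynomial time by the standard reduction to maximum flow after vertex-splitting. The algorithm therefore builds the two contracted graphs, solves two minimum vertex-cut instances, and returns the smaller value (or reports infeasibility when both are infinite, which is polynomial-time detectable as already noted); its correctness is exactly the structural claim (some minimum-weight valid cut, being inclusion-minimal, avoids $V(A)$ or $V(B)$) combined with the equivalence above. The only delicate part of the write-up is the topological argument behind the structural claim.
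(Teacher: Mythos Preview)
Your proposal is correct and follows essentially the same approach as the paper: both reduce the problem to two ordinary minimum vertex-cut computations, one for each of the two boundary $s_1$--$s_2$ arcs on the outer face, after arguing that some optimal cut leaves one of the arcs intact. The paper simply asserts this structural fact as ``easy to see,'' whereas you supply the Jordan-curve argument explicitly; the resulting algorithms (dummy node plus infinite weights on the arc versus contracting the arc to a terminal) are equivalent.
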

\vspace{-0.1in}

\begin{proof}
If $s_1$ and $s_2$ are  in the same face,
we can find a planar embedding of $G$ such that 
$s_1$ and $s_2$ are on the boundary of the embedding (i.e, on the outer face). 
It is easy to see that after removing the connectivity
preserving minimum cut separating $s_1$ and $s_2$ from $t$,
$s_1$ and $s_2$ are still connected by one
of the two boundary paths between 
$s_1$ and $s_2$ (see Figure ~\ref{fig:pla1}, where
$s_1$ and $s_2$ are connected by either the 
path $s_1,A_1,\ldots,A_k,s_2$ or
the path $s_1,B_1,\ldots ,B_m,s_2$). Thus we can use the following
algorithm to solve the problem.

\begin{enumerate}
\item Add a dummy node $D$, and connect $D$ to nodes $s_1,A_1,\ldots,A_k,s_2$.  Set the
weight  of nodes $s_1,A_1, \ldots , A_k,s_2$ to infinity.
\item Compute the minimum cut between $D$ and $t$. Let $x_{1}$ be the
weight of this cut.
\item  Remove all the previously added edges, connect $D$ to nodes
$s_1,B_1,\ldots,B_m,s_2$,  and set the weight  of nodes $s_1,B_1,\ldots,B_m,s_2$
to infinity.  
\item Compute the minimum cut between $D$ and $t$.  Let $x_{2}$ be the
weight of this cut.
\item Choose the smaller one between $x_1$ and $x_2$ as solution. 
\end{enumerate}
Obviously, the above algorithm runs in polynomial time and generates the optimal solution. Thus the theorem follows. 
\end{proof}

 \begin{figure}
\centering
\includegraphics[width=2.5in]{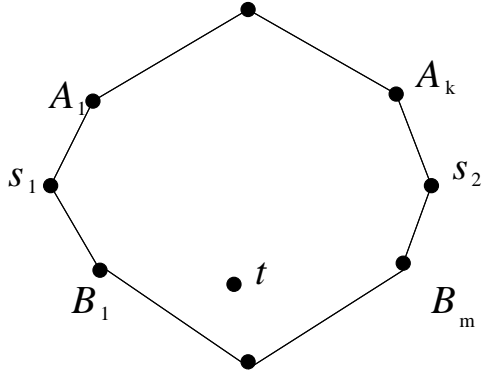}
\caption{An example illustrating Theorem \ref{the-planar2}.}
\label{fig:pla1}
\end{figure} 
 

 Next we show that CPMEC in planar graphs has polynomial 
 time solutions.
 
 First we introduce a perturbation technique that is crucial for the algorithm  of the CPMEC problem.
 Let $G=\{V,E\}$, $V=\{v_1,\ldots,v_n\}$, $E=\{e_1,\ldots,e_m\}$ be an undirected graph with each edge $e_i$
 associated with a non-negative weight (or cost) $c_i$. We use a new weighting function
 $c_i'=c_i+\epsilon_i$ for each edge $e_{i}$, where $\epsilon_i$
 is a small positive  perturbation number. The perturbation numbers
 are assigned in a way that no two set of edges have the same total
 weights. Note that such a perturbation always exists. For example, we can first arbitrarily order all edges and add a small value to the weight (assuming to be an integer) of each edge which is $10^{-r}$, where $r$ is the rank of the edge in the order. In this way, any two cuts (or more generally, two subsets of edges) in $G$ will 
 have different weights unless they are completely identical. Based on this property, we
 have the following observations. (1) Any cut  is unique. (2) Given
 any node $v\in V$, let $C_{v}$ be the connected component containing $v$ and resulting from the minimum edge cut between $v$ and $t$. Then all 
 nodes in $C_v$ can be uniquely determined due to the  perturbation technique. 
 
  Now consider the CPMEC between nodes $s_1$, $v$, and $t$, which is also unique. This CPMEC cuts $G$
 into two connected components, and let $C_{s_1,v}$ be 
 the one containing $s_1$ and $v$.
 If $C_{s_1,v_1} = C_{s_1,v_2}$ for two different nodes $v_{1}$ and $v_{2}$, we say that $v_{1}$ and $v_{2}$ are {\em connectivity preserving equivalent (CPE)}. 
 We can  classify all nodes in $G$ into multiple 
 CPE classes.
 
 Starting from node $s_1$ in the new graph, we can compute
 $C_{s_1,v}$ for every node $v\neq t$ in the graph
 using Algorithm~\ref{alg:CPMEC}.
 
   \begin{algorithm}
   
       Fix a planar embedding of $G$ with $t$ being a node in the outer surface.
 
       Let $S=\{s_1\}$, $C_{s_1,s_1}= \{s_1\} $, and $C_{ep}(s_1,s_1,t)=C_e(s_1,t)$;
 
       \While{$s_2 \notin S$} {
       \For{every neighbor  $v$ ($v\notin S$)
        of a node $ s \in S$}  {
        Compute the minimum cut that separates
             $v$ and all nodes in $C_{s_1,s}$ from $t$; let
 the connected component containing $v$ and $C_{s_1,s}$ be
  $C_{s_1,v}$ ; 
  
       let the weight of  the cut be $u(v,s)$; }
 
       Find the pair of $v$ and $s$ with the minimum $u(v,s)$; denote them as
       as $v*$ and $s*$;
 
       $S= S\bigcup C_{s_1,v*} $; 
  
       \For{every node $v'$ in  $C_{s_1,v*}$ } {
              
            \If{$v'\notin S$ } {
 
               $C_{s_1,v'} = C_{s_1,v*}$;
 
               $C_{ep}(s_1,v',t)=u(v*,s*)$;
 
            }
          
        }
     } 
      Output $C_{ep}(s_1,s_2,t)$.
    \caption{CPMEC Algorithm for Planar Graphs}
 \label{alg:CPMEC}
 \end{algorithm}     
    
  In Algorithm~\ref{alg:CPMEC},  $C_{ep}(s_1,v,t)$ is the value of  CPMEC between
 $s_1$, $v$, and $t$. $C_e(s_1,t)$ is the value  of the minimum
 cut between $s_1$ ant $t$. The minimum cut separating $v$ and $C_{s_{1},s}$ from $t$ can be reduced to computing a minimum cut of two nodes $D$ and $t$, where $D$ is a dummy node connecting to $v$ and every 
 node in $C_{s_{1},s}$ with an edge of infinity weight. The 
idea of the algorithm is similar to the idea of Dijkstra algorithm~\cite{Dijkstra59} for
shortest path. Though the idea is straightforward, the proof is highly non-trivial. It is
also intriguing that
the proof does not work for general graphs. It would be interesting to 
classify the types of graphs that the  algorithm can find the optimal cut.

  It is easy to see that this is a polynomial time algorithm for finding the
 CPMEC between $s_1$, $s_2$ and $t$.
 We have the following observation. 
 In each iteration (the While loop),  the algorithm finds 
 a node $v$ with the minimum $C_{ep}(s_1,v,t)$ among
 all nodes not in $S$. This implies that the  CPMEC found in each
  iteration is non-decreasing, and for any node $v* \notin S$, $C_{ep}(s_1,v*,t) > C_{ep}(s_1,v,t) $
  for any node $v \in S$. 
 
The basic idea for proving the correctness of the algorithm is to show that
if the cut obtained at any step of the algorithm
is not the actual CPMEC, then we will have a contradiction. 
 The contradiction can be obtained
from the fact that for any two ``neighboring'' CPMECs, there is no hole
completely surrounded by the two CPMECs. 

We need the  
 following lemmas for the proof.
 
 \vspace{-0.1in}
 \begin{lemma}
 \label{lemma:containment}
Let $\alpha$ be a node in $C_{s_1,v*}$ and $ Y= C_{s_1,\alpha} \cap C_{s_1,v*} $. If 
the connected component of $Y$ containing $s_1$  also contains $\alpha$, then
$C_{s_1,\alpha}$ is completely contained in $C_{s_1,v*}$.
 \end{lemma}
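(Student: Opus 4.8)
The plan is to run a submodular-uncrossing argument on edge cuts, using the genericity supplied by the perturbation to promote ``a minimum cut'' to ``the minimum cut''. Throughout write $A := C_{s_1,\alpha}$, $B := C_{s_1,v*}$, $Y := A\cap B$, and let $Y_0$ be the connected component of $Y$ containing $s_1$ (so $\alpha\in Y_0$ by the hypothesis). First I would collect the structural facts. Both $A$ and $B$ are connected vertex sets avoiding $t$ and containing $s_1$, and $\alpha\in A\cap B$. Because the perturbed weights are generic, for any connected vertex set $U$ with $t\notin U$ the minimum-weight edge set separating $U$ from $t$ is unique and equals the boundary $\delta(W)$ of the unique vertex set $W$ arising as the connected component containing $U$ after the cut; applying this, $\delta(A)$ is the unique minimum connectivity-preserving cut for the triple $(s_1,\alpha,t)$, so $w(\delta(A)) = C_{ep}(s_1,\alpha,t)$, and $\delta(B)$ is the unique minimum cut separating $\{v*\}\cup C_{s_1,s*}$ from $t$, so $u(v*,s*) = w(\delta(B))$.

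Next I would record three elementary inequalities. (i) $A\cup B$ is connected (both pieces are connected and share $s_1$), it contains $\{v*\}\cup C_{s_1,s*}$, and it avoids $t$, so it is a candidate separator and $w(\delta(A\cup B)) \ge w(\delta(B))$ by minimality of $\delta(B)$. (ii) Since $Y_0$ is a full connected component of the subgraph induced on $Y$, every edge leaving $Y_0$ also leaves $Y$, i.e. $\delta(Y_0)\subseteq\delta(Y)$, hence $w(\delta(Y_0))\le w(\delta(Y))$. (iii) $\delta(Y_0)$ is itself a valid connectivity-preserving cut for $(s_1,\alpha,t)$: $Y_0$ is connected, contains both $s_1$ and $\alpha$, and excludes $t$ (as $Y_0\subseteq A$), so deleting $\delta(Y_0)$ leaves $Y_0$ as a component containing $s_1$ and $\alpha$ but not $t$; therefore $w(\delta(Y_0))\ge C_{ep}(s_1,\alpha,t) = w(\delta(A))$.

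Now I would combine these with submodularity of the edge-cut function, $w(\delta(A)) + w(\delta(B)) \ge w(\delta(A\cap B)) + w(\delta(A\cup B))$, to obtain
\[
w(\delta(A)) + w(\delta(B)) \ge w(\delta(Y)) + w(\delta(A\cup B)) \ge w(\delta(Y_0)) + w(\delta(B)) \ge w(\delta(A)) + w(\delta(B)),
\]
so every inequality is an equality; in particular $w(\delta(Y_0)) = w(\delta(A)) = C_{ep}(s_1,\alpha,t)$, which makes $\delta(Y_0)$ a minimum connectivity-preserving cut for $(s_1,\alpha,t)$. Uniqueness (from the perturbation) then forces $\delta(Y_0) = \delta(A)$ as edge sets. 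Finally, since $Y_0\subseteq A$ and $G$ restricted to $A$ is connected, a vertex of $A\setminus Y_0$ would yield, along a path inside $A$ starting at $s_1$, an edge with exactly one endpoint in $Y_0$ — an edge of $\delta(Y_0)$ that is not in $\delta(A)$, a contradiction. Hence $Y_0 = A$, and therefore $A = Y_0 \subseteq A\cap B \subseteq B = C_{s_1,v*}$, which is the claim.

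The step I expect to take the most care is the first one: verifying that the optimal cuts in question are exactly the edge boundaries $\delta(C_{s_1,\alpha})$ and $\delta(C_{s_1,v*})$ and that these are the \emph{unique} minimizers. This is precisely where the perturbation must be invoked, both to make ``the component containing the terminals'' well defined and to rule out any distinct vertex set achieving the same cut weight. The uncrossing itself is routine, and it is worth noting that this particular argument does not appear to need planarity — the planarity of $G$ should enter only in the subsequent lemmas that control how two ``neighboring'' CPMECs can interleave.
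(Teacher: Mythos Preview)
Your argument is correct and is precisely the submodularity/uncrossing step that the paper's one-line proof (``the existence of region $Z$ violates the uniqueness of a cut value'') leaves implicit; the paper is invoking the same idea but does not write out the chain $w(\delta(A))+w(\delta(B))\ge w(\delta(Y))+w(\delta(A\cup B))\ge w(\delta(Y_0))+w(\delta(B))\ge w(\delta(A))+w(\delta(B))$ or the passage from $\delta(Y_0)=\delta(A)$ to $Y_0=A$. Your observation that planarity plays no role here is also correct and worth keeping.
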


\begin{proof}
 Suppose that $C_{s_1,\alpha}$ is not completely contained 
in $C_{s_1, v*}$. Then consider the region $Z=C_{s_1,\alpha}\backslash C_{s_1, v*}$
(see the shadowed area in Fig.~\ref{fig:contain}). The existence of 
region $Z$ violates the uniqueness of a cut value. This means that $C_{s_1,\alpha}$ must
be completely contained in  $C_{s_1, v*}$.
\end{proof}

 \vspace{-0.15in}
 \begin{lemma}
 \label{lemma:share-cut1}
 Let $C_{1}$ be the CPMEC separating $s_{1}$ and $A$ from $t$, and $C_{2}$ be a different CPMEC separating $s_{1}$, $A_{1}$ from $t$
 such that $A \in C_{s_1,A_1}$ and  none of the two sets $C_{s_1,A}$
 and $C_{s_1,A_1}$ completely contains the other.  Then there exists no hole that is completely
surrounded by $C_{s_1,A_1}$ and  $C_{s_1,A}$.
 \end{lemma}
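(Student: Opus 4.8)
The plan is to argue by contradiction, using two ingredients: the uniqueness of every CPMEC guaranteed by the perturbation, and a ``fill the hole'' operation. Fix the planar embedding of $G$ with $t$ on the outer face, write $P := C_{s_1,A}$ and $Q := C_{s_1,A_1}$, and for an edge set $F$ let $w(F)$ denote its total (perturbed) weight. By the perturbation, $C_1 = \partial P$ and $C_2 = \partial Q$ are the unique minimum CPMECs for $(s_1,A,t)$ and $(s_1,A_1,t)$, so $w(\partial P) = C_{ep}(s_1,A,t)$, $w(\partial Q) = C_{ep}(s_1,A_1,t)$, and $P$ (resp.\ $Q$) is the unique connected vertex set of minimum boundary weight containing both of its terminals and avoiding $t$. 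Suppose for contradiction that some hole $H$ is completely surrounded by $P$ and $Q$: a nonempty bounded region, disjoint from $P$ and $Q$, such that every edge leaving $H$ ends in $P\cup Q$ (so $H$ is enclosed and has no face-path to $t$) and $\partial H$ meets both $\partial P$ and $\partial Q$. Note that $\partial H$ cannot lie entirely inside $\partial P$ (or entirely inside $\partial Q$), because a single minimum CPMEC encloses no hole on its own: otherwise merging that hole into $P$ would give a feasible CPMEC for $(s_1,A,t)$ of strictly smaller weight, contradicting the minimality of $C_1$.

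Next I would form the two filled regions $P^{+} := P\cup H$ and $Q^{+} := Q\cup H$. Since $\partial H$ meets both $\partial P$ and $\partial Q$, both $P^{+}$ and $Q^{+}$ are connected; since $H$ is enclosed, neither reaches $t$; and $P^{+}\supseteq\{s_1,A\}$ while $Q^{+}\supseteq\{s_1,A_1\}$. Hence $\partial P^{+}$ is feasible for CPMEC$(s_1,A,t)$ and $\partial Q^{+}$ for CPMEC$(s_1,A_1,t)$, which gives $w(\partial P^{+})\ge w(\partial P)$ and $w(\partial Q^{+})\ge w(\partial Q)$.

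The crux is a short edge count. Partition the edges of $\partial H$ according to whether the endpoint outside $H$ lies in $P\setminus Q$, in $Q\setminus P$, or in $P\cap Q$, with total weights $a$, $b$, $c$. Merging $H$ into $P$ deletes from the cut exactly the $\partial H$-edges meeting $P$ (total weight $a+c$) and inserts exactly those meeting $Q\setminus P$ (total weight $b$), so $w(\partial P^{+}) = w(\partial P) - a - c + b$; symmetrically $w(\partial Q^{+}) = w(\partial Q) - b - c + a$. Adding these identities,
\[
\bigl(w(\partial P^{+})-w(\partial P)\bigr)+\bigl(w(\partial Q^{+})-w(\partial Q)\bigr) = -2c \le 0 .
\]
Together with $w(\partial P^{+})\ge w(\partial P)$ and $w(\partial Q^{+})\ge w(\partial Q)$ this forces $c=0$ and $w(\partial P^{+})=w(\partial P)$; but then $\partial P^{+}$ is a second minimum CPMEC for $(s_1,A,t)$ whose $s_1$-side component is $P^{+}\supsetneq P$, contradicting uniqueness. (Equivalently, if $c>0$ then one of $\partial P^{+},\partial Q^{+}$ is strictly cheaper than the optimum, an immediate contradiction.) Thus no such hole $H$ exists.

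The step I expect to be the main obstacle is making the planar-topological claims airtight, not the arithmetic: pinning down precise definitions of ``hole'' and ``completely surrounded'', checking that being surrounded really forces every edge leaving $H$ into $P\cup Q$ (hence that $P^{+}$ and $Q^{+}$ are connected and still disconnected from $t$), and handling a ``thin'' hole that contains no vertices, where $P^{+}=P$ and one must instead contradict minimality directly using the fact that such an $H$ is bounded by a cycle of edges lying in $\partial P\cap\partial Q$. Locating $H$ correctly relative to $P$ and $Q$---in particular, that $A$ is separated from $s_1$ inside $P\cap Q$, which is exactly Lemma~\ref{lemma:containment} applied with $\alpha=A$ and $v^{*}=A_1$ (using $A\in C_{s_1,A_1}$ and $C_{s_1,A}\not\subseteq C_{s_1,A_1}$)---is the delicate geometric input; with the geometry pinned down, the cut count above is routine.
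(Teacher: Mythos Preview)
Your argument is correct and matches the paper's approach: both suppose a hole $H$ exists, partition $\partial H$ into the same three types (your $a,b,c$ are the paper's $L$, $L_1$, and the shared segments), fill $H$ into one of the two regions, and contradict the minimality/uniqueness of that CPMEC. The only cosmetic difference is that the paper uses the perturbation to assert $L\neq L_1$ and fills into whichever side yields the larger drop, whereas you sum the two feasibility inequalities to obtain $-2c\le 0$ and invoke uniqueness; your bookkeeping of the $c$ term is slightly more careful, but the underlying idea is identical.
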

 
\begin{proof}
 First, from the definition, we know that both $C_{s_1,A}$ and $C_{s_1,A_1}$ 
 contain $s_{1}$ and $A$. The intersection of $C_{s_1,A}$ and $C_{s_1,A_1}$  
 must have one connected component containing $s_{1}$ and another connected component containing $A$. 
 Otherwise, if there is only one connected component, then we can 
 either expand or shrink the boundary of one of $C_{s_{1},A}$ and $C_{s_{1},A_{1}}$ and still form a CPMEC. This violates the fact that each CPMEC cut is unique. 
 
If there exists a hole that is completely surrounded by the two CPMECs,
then for the boundary of the hole, we can divide the boundary into
 3 types of segments: The first type of segments is in the boundary
of $C_{s_{1},A}$ but not boundary of  $C_{s_{1},A_{1}}$. We denote the total length
of this type of segments as $L$. The second type of segments is in the boundary
of $C_{s_{1},A_1}$ but not boundary of  $C_{s_{1},A}$. We denote the total length
of this type of segments as $L_1$. The third type of segments is in the boundary
of $C_{s_{1},A_1}$  and  $C_{s_{1},A}$. By the uniqueness of the weight value,
we have either $L>L_1$ or  $L_1>L$. If $L>L_1$ then we combine  CPMEC $C_{s_{1},A}$
with the region inside the hole. Now we can get a smaller CPMEC, because 
the new cut will decrease by a value of $L$ and increase  by a value of $L_1$,
and the overall effect is that the value of the cut will decrease by at least
$L-L_1$. This is a contradiction.
If $L<L_1$ then we combine CPMEC $C_{s_{1},A_1}$
with the region inside the hole. Now we can get a smaller CPMEC, because 
the new cut will decrease by a value of $L_1$ and increase  by a value of $L$,
and the overall effect is that the value of the cut will decrease by at least
$L_1-L$. This is also a contradiction. So there is no hole that is completely
surrounded by $C_{s_{1},A}$ and $C_{s_{1},A_1}$.
\end{proof}

\vspace{-0.19in}
 \begin{lemma}
 \label{lemma:share-cut2}
 Let $C_{1}$ be the CPMEC separating $s_{1}$ and $A$ from $t$, and $C_{2}$ be a different CPMEC separating $s_{1}$, $A_{1}$ from $t$ such that
 $A$ and $A_1$ are in the same face of the graph with $ A_1 \notin C_{s_1,A}$
 and $A \notin C_{s_1,A_1}$.
 Let  $Y_1$ be the graph induced by the set of nodes
 in $C_{s_1,A}$ but not in $C_{s_1,A_1}$, and  $Y_2$ 
 be the graph induced by the set of nodes in $C_{s_1,A_1}$ but not in $C_{s_1,A}$. 
  Then there exists no hole that is completely
surrounded by $C_{s_1,A_1}$ and  $C_{s_1,A}$.
 \end{lemma}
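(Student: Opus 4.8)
The plan is to adapt the argument of Lemma~\ref{lemma:share-cut1} to this ``disjoint tips'' situation. First I would record the structural picture obtained from the fixed planar embedding of Algorithm~\ref{alg:CPMEC}: both $C_{s_1,A}$ and $C_{s_1,A_1}$ contain $s_1$, while by hypothesis $A_1\notin C_{s_1,A}$ and $A\notin C_{s_1,A_1}$, so the boundaries of the two regions must interleave, with the induced pieces $Y_1$ (nodes of $C_{s_1,A}$ not in $C_{s_1,A_1}$) and $Y_2$ (nodes of $C_{s_1,A_1}$ not in $C_{s_1,A}$) both nonempty and attached to the shared core $C_{s_1,A}\cap C_{s_1,A_1}$. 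Suppose, for contradiction, that there is a hole $H$ completely surrounded by the boundaries of $C_{s_1,A}$ and $C_{s_1,A_1}$; since $t$ is a node of the outer face, $t\notin H$.

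Next, exactly as in the proof of Lemma~\ref{lemma:share-cut1}, I would split the boundary of $H$ into three kinds of segments: those lying on the boundary of $C_{s_1,A}$ only (total weight $L$), those on the boundary of $C_{s_1,A_1}$ only (total weight $L_1$), and those common to both. By the perturbation technique, distinct subsets of edges have distinct total weights, so $L\neq L_1$; without loss of generality $L>L_1$. I would then merge the interior of $H$ into $C_{s_1,A}$: because $H$ is completely surrounded, this operation removes exactly the $L$-segments from the cut and adds exactly the $L_1$-segments, and affects no other cut edge, so the total weight of the resulting edge set decreases by at least $L-L_1>0$.

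The main obstacle — and the place where the same-face hypothesis on $A$ and $A_1$, together with the auxiliary graphs $Y_1$ and $Y_2$, is actually needed — is to verify that the merged region is still a \emph{valid} CPMEC, i.e.\ that it separates $t$ from $s_1$ while keeping $s_1$ and $A$ connected. Separation of $t$ is immediate, since $t\notin H$ and the merge only enlarges the $s_1$-side component inside the region previously cut off from $t$. For the connectivity requirement I would argue that absorbing the interior vertices of $H$ cannot disconnect $s_1$ from $A$: those vertices are glued to the component of $s_1$ and $A$ along the $L$-segments, and since $A$ and $A_1$ lie on a common face, $Y_1$ together with the core already carries an $s_1$–$A$ path that the merge can only augment — filling $H$ adds vertices and edges to the component and deletes none. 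Hence the merged cut is a connectivity-preserving edge cut of strictly smaller weight than $C_{s_1,A}$, contradicting the uniqueness and minimality of the CPMEC between $s_1$, $A$, and $t$. The case $L_1>L$ is handled symmetrically by merging $H$ into $C_{s_1,A_1}$. Therefore no hole $H$ completely surrounded by $C_{s_1,A_1}$ and $C_{s_1,A}$ can exist.
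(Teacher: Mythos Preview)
Your proposal is correct and follows essentially the same route as the paper: both proofs simply replay the hole-filling argument of Lemma~\ref{lemma:share-cut1}, noting that here the intersection $C_{s_1,A}\cap C_{s_1,A_1}$ has a component containing $s_1$ while $A\notin C_{s_1,A_1}$ and $A_1\notin C_{s_1,A}$, and then absorbing the hole into whichever region gives a strictly cheaper cut to obtain a contradiction. In fact your write-up is more explicit than the paper's, which merely says the argument is ``similar'' and sketches the expansion step in one sentence; your added verification that merging $H$ only enlarges the $s_1$-side component (hence preserves both separation from $t$ and $s_1$--$A$ connectivity) fills in a point the paper leaves to the reader.
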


\begin{proof}
  The argument is similar as that in Lemma~\ref{lemma:share-cut1}. The only difference
is that the intersection of $C_{s_1,A}$ and $C_{s_1,A_1}$  
 must have one connected component containing $s_{1}$, but $A \notin C_{s_1,A_1}$ and $A_1 \notin C_{s_1,A}$.
If there exists a hole that is completely surrounded by the two CPMECs,
then we can also 
  expand  the boundary of one of $C_{s_{1},A}$ and $C_{s_{1},A_{1}}$ 
to incorporate the region in the hole and  form a smaller CPMEC,
which is a contradiction.  
 \end{proof}

\vspace{-0.1in}
 \begin{lemma}
 \label{lemma:hole}
 At any time point during the execution of the algorithm, there is no hole (i.e., a missing subgraph in the embedding of $G$) in $S$.
 \end{lemma}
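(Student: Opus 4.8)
The plan is to prove the lemma by induction on the number of iterations of the \textbf{While} loop, carrying the invariant ``$S$ is connected and the subgraph of $G$ induced on $S$ has no hole in the fixed planar embedding.'' The base case $S=\{s_1\}$ is immediate. For the inductive step, assume the invariant for the current $S$, let $C_{s_1,v^*}$ be the component added in this iteration (with parent $s^*$), and set $S'=S\cup C_{s_1,v^*}$.

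First I would establish two preliminary facts. (i) \emph{$C_{s_1,v^*}$ is itself hole-free.} It is, by construction, the connected $s_1$-side piece of the minimum edge cut separating $\{v^*\}\cup C_{s_1,s^*}$ from $t$; since $t$ lies on the outer face, any hole $H_0$ of $C_{s_1,v^*}$ satisfies $t\notin H_0$ and $N_G(H_0)\subseteq C_{s_1,v^*}$, so moving $H_0$ into the component removes the (positively weighted) edges between $H_0$ and $C_{s_1,v^*}$ from the cut, yielding a strictly cheaper valid cut --- impossible for the (perturbation-unique) minimum cut. (ii) \emph{$S'$ is connected}, because $C_{s_1,s^*}\subseteq S$ and $C_{s_1,s^*}\subseteq C_{s_1,v^*}$, so the two pieces overlap. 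Hence both $S$ (induction) and $C_{s_1,v^*}$ (by (i)) are hole-free and both contain $s_1$.

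Now suppose for contradiction that $S'$ has a hole $H$, i.e., a connected component of $G\setminus S'$ enclosed by $S'$ in the embedding; in particular $t\notin H$. If $N_G(H)\subseteq S$, then $H$ is also a connected component of $G\setminus S$ and remains enclosed (deleting fewer vertices cannot bring $H$ to the outer face), so $H$ would already be a hole of $S$, contradicting the invariant; symmetrically $N_G(H)\subseteq C_{s_1,v^*}$ contradicts (i). So $H$ is adjacent both to a node of $S\setminus C_{s_1,v^*}$ and to a node of $C_{s_1,v^*}\setminus S$. Writing $S=\bigcup_i C_{s_1,v_i^*}$ as the union of the CPMEC components added in earlier iterations, the goal is to single out one earlier component $C_1=C_{s_1,v_j^*}$ bordering $H$ through a vertex outside $C_2:=C_{s_1,v^*}$, and to exhibit a sub-region of $H$ that is \emph{completely} surrounded by $C_1\cup C_2$. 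Since $C_{ep}(s_1,\cdot,t)$ strictly increases across iterations, $C_1$ and $C_2$ are distinct CPMECs, so Lemma~\ref{lemma:share-cut1} applies when one of the two contains the distinguished vertex of the other, Lemma~\ref{lemma:share-cut2} applies when neither does, and Lemma~\ref{lemma:containment} disposes of the genuinely nested cases; either way we get the desired contradiction.

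The step I expect to be the main obstacle is precisely this reduction: a priori $\partial H$ may be stitched together from the boundaries of \emph{several} earlier components together with $C_{s_1,v^*}$, so no single pair need surround $H$ on its own. To handle this I would run a second, innermost-first induction --- take $H$ with enclosed region minimal under inclusion, trace the facial boundary walk of that region, and show that two \emph{consecutive} CPMEC components along the walk must violate the no-shared-hole conclusion of Lemma~\ref{lemma:share-cut1} or~\ref{lemma:share-cut2}, since otherwise merging their union with the enclosed part either produces a strictly cheaper CPMEC (contradicting perturbed uniqueness) or a strictly smaller hole (contradicting minimality of $H$). Care is also needed to keep every ``hole''/``enclosed region'' statement purely topological in the fixed embedding and to verify the monotonicity claim used above, that deleting a subset of the removed vertices never destroys the enclosure of a hole.
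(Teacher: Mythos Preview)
Your overall strategy coincides with the paper's: induct on iterations of the \textbf{While} loop, assume $S$ hole-free, add the new component $C_{s_1,v^*}$, and derive a contradiction from a putative hole by invoking the submodularity-style argument of Lemma~\ref{lemma:share-cut1}. The paper's proof, however, is much shorter than yours and does not engage with the difficulty you single out. It simply asserts that if a hole appears in $S\cup C_{s_1,v^*}$ then ``there must exist a hole between $C_{s_1,s^*}$ and $C_{s_1,v^*}$'' (where $s^*$ is the parent chosen in that iteration), observes $C_{s_1,s^*}\subseteq C_{s_1,v^*}$, and appeals once to the argument of Lemma~\ref{lemma:share-cut1}. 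There is no case analysis on which earlier components border the hole, no inner induction on a minimal enclosed region, and no use of Lemma~\ref{lemma:share-cut2} or Lemma~\ref{lemma:containment} at this point.

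In other words, the very issue you flag as ``the main obstacle'' --- that $\partial H$ might be assembled from boundaries of \emph{several} earlier components rather than from a single pair --- is exactly what the paper glosses over. Your preliminary facts (i) and (ii) and the neighbour dichotomy for $H$ are sound and more explicit than anything in the paper. Your proposed fix via an innermost-hole argument and consecutive components along the facial boundary walk is a reasonable plan, but be aware that two \emph{consecutive} CPMEC components along the walk need not, by themselves, enclose any sub-region of $H$; you will have to argue more carefully (e.g., that among the components touching $\partial H$ there is one whose boundary, together with that of $C_{s_1,v^*}$, already closes off a subhole) before Lemmas~\ref{lemma:share-cut1}/\ref{lemma:share-cut2} can be invoked. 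So your route is the same as the paper's in spirit but strictly more rigorous in intent; the extra work you are planning is not a detour but a genuine completion of a step the paper leaves implicit.
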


\begin{proof}
We  use mathematical induction to prove this. Assume that
before node $v$ is added into $S$, there is no hole in $S$.
Suppose after $v$ is added, a hole forms in $S$, as shown in Fig.~\ref{fig:hole}.
Consider the node $s$ that is chosen in the algorithm during the iteration that $v$ 
is added (in the line "find the pair of $v$ and $s$ with the minimum $u(v,s)$").
Now we can see that there must exist a hole between  $C_{s_1,s}$  and  $C_{s_1,v}$.
Since  $C_{s_1,s}$ is contained
in  $C_{s_1,v}$, we can see that it is impossible that any hole can exist 
between $C_{s_1,s}$ and $C_{s_1,v}$, using a similar argument 
 in the proof of Lemma~\ref{lemma:share-cut1}.
Thus, there is no hole in $S$ at any step of the algorithm.
\end{proof}  


 We use mathematical induction to show that the algorithm finds the 
 CPMEC between $s_1$, $s_2$ and $t$. The base case is the initial step which is obviously true as $C_{ep}(s_{1},s_{1},t)=C_{e}(s_{1},t)$. For the induction hypothesis, we assume that
 all nodes added to $S$ in previous iterations have their $C_{ep}$values correctly computed (i.e., equal to their true CPMEC value).  Also, for any node $i$ added into $S$ in some iteration after node $j$, the  $C_{ep}$ value 
 for node $j$ is less than the  $C_{ep}$
 value for node $i$.  
 Now consider the iteration when node $v*$ is added to $S$. 
 
   Suppose that $C_{ep}(s_1,v*,t)$ computed in this iteration
 is not the true CPMEC. 
 Let $C'_{ep}(s_1,v*,t)$ be the
 true CPMEC, 
 and $C'_{s_1,v*}$
 be its connected component containing $s_1$ and $v*$.
  Consider the intersection $U_1=C'_{s_1,v*} \cap S$, where $S$ is the 
 set of added nodes before this iteration.
 Let $U_{2}$ be the connected component of $U_1$ containing $s_1$ and $U_3 =  C'_{s_1,v*} \setminus U_2$. 
By Lemma~\ref{lemma:hole}, we know that there should be no hole in $U_2$.
 Let $U_4$ be the set of nodes in $U_2$ connected to $U_3$. Let $\alpha$ be any node in  $U_4$
 connected to  a node $\beta$ in $U_3$ and $\Gamma=U_2\cap C_{s_1,\alpha}$.  Denote the
 connected component of $\Gamma$ that contains $s_1$ as $\Gamma_1$. If
 $\alpha \in \Gamma_1$, then we must have $ C_{ep}(s_1,\alpha,t) \leq C(U_2)$, 
 where $C(U_2)$ is the cut value of  $U_2$ (i.e., the cut between $U_{2}$ and $G\setminus U_{2}$). 
 
 Now we have $C_{ep}(s_1,\beta,t) \leq C(U_3) + C_{ep}(s_1,\alpha,t) - C(U_2,U_3)$, where $C(U_3)$ is the cut value of region $U_3$, and $C(U_2,U_3)$
 is the cut value between $U_2$ and $U_3$.    But,
    $ C(U_3) + C_{ep}(s_1,\alpha,t) - C(U_2,U_3)
         \leq C(U_3) + C(U_2) - C(U_2,U_3) $
 and 
        $ C(U_3) + C(U_2) - C(U_2,U_3) = C'_{ep}(s_1,v*,t)$. 
 Thus we have
   $ C_{ep}(s_1,\beta,t) \leq C'_{ep}(s_1,v*,t)$.  
 This means that $\beta$ should be added into $S$ before $v*$,
  according to
 the induction hypothesis. This is a contradiction. Hence we know that
 for any  $\alpha$ in  $U_4$, $\alpha \notin \Gamma_1$.
 
  If $U_4$ has only one node
 $\alpha$ which connects to $\beta$ in $U_3$ (see Fig.~\ref{fig:case1}),
  then  
 $ C_{ep}(s_1,\beta,t) \leq C(U_3) + C_{ep}(s_1,\alpha,t) - C(U_2,U_3) $.
 Thus we also have
   $C_{ep}(s_1,\beta,t) \leq C'_{ep}(s_1,v*,t) $, 
         which is a contradiction. 
 
  \begin{figure}
 \centering
 \includegraphics[height=2.5in]{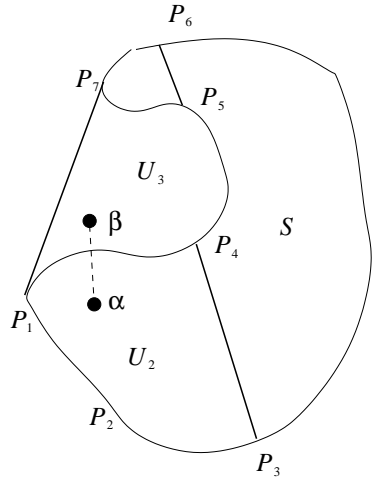}
 \caption{The first case of $U_4$.}
 \label{fig:case1}
\end{figure}

\begin{figure}
 \centering
 \includegraphics[height=2.5in]{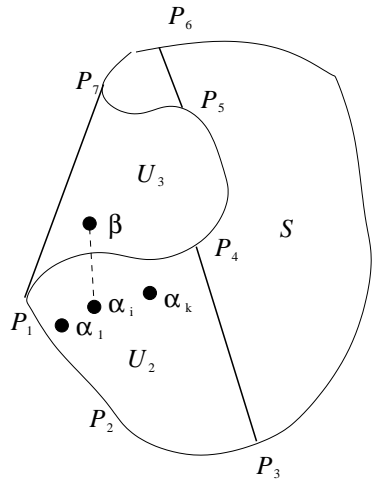}
 \caption{ The second case of $U_4$.}
 \label{fig:case2}
 \end{figure}

 If $U_{4}$ has more than one node, denote them as $\alpha_1,\ldots,\alpha_k$ (see Fig.~\ref{fig:case2}), then
 we have the following Lemma. 
 
 \begin{lemma}
 \label{lemma:orientation}
  There exists at least one 
  $\alpha \in \{\alpha_1,\ldots,\alpha_k\}$ such that  $ C_{s_1,\alpha} \subseteq U_2$.  
 \end{lemma}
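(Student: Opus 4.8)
The plan is to argue by contradiction: suppose that $C_{s_1,\alpha_i}\not\subseteq U_2$ for every $i\in\{1,\dots,k\}$, and derive an impossibility from planarity. Fix the planar embedding of $G$ with $t$ on the outer face. By Lemma~\ref{lemma:hole} the set $U_2$ has no hole, so it occupies a topological disk bounded by a closed walk $\partial U_2$; since every $\alpha_i$ is adjacent to a vertex of $U_3$ that lies outside this disk, all of $\alpha_1,\dots,\alpha_k$ lie on $\partial U_2$, and I will work with the cyclic order in which a traversal of $\partial U_2$ encounters them. For each $i$ we have already established $\alpha_i\notin\Gamma_1(\alpha_i)$, i.e.\ $s_1$ and $\alpha_i$ lie in different connected components of $C_{s_1,\alpha_i}\cap U_2$; combined with $C_{s_1,\alpha_i}\not\subseteq U_2$ this forces $C_{s_1,\alpha_i}$ to leave the disk and reconnect $\alpha_i$ to $s_1$ from the outside, so $C_{s_1,\alpha_i}$ must ``wrap around'' an arc $A_i$ of $\partial U_2$ that, inside the disk, separates $\alpha_i$ from $s_1$.

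The core step is to show that $k\ge 2$ such wrapping regions cannot coexist. For $i\ne j$, I first rule out that one of $C_{s_1,\alpha_i}$ and $C_{s_1,\alpha_j}$ contains the other: by the uniqueness of cuts furnished by the perturbation, together with the argument used in Lemma~\ref{lemma:containment}, the contained region would join $s_1$ to its $\alpha$-endpoint inside $U_2$, contradicting $\alpha_i\notin\Gamma_1(\alpha_i)$ (respectively $\alpha_j\notin\Gamma_1(\alpha_j)$). Hence neither contains the other, so Lemma~\ref{lemma:share-cut1} (or Lemma~\ref{lemma:share-cut2} when $\alpha_i$ and $\alpha_j$ share a face) applies: no hole is surrounded by $C_{s_1,\alpha_i}$ and $C_{s_1,\alpha_j}$. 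I then show that the arcs $A_1,\dots,A_k$ are laminar around the disk $U_2$ — two of them cannot interleave without the corresponding regions surrounding a hole — and, choosing an innermost arc $A_{i^*}$, I argue that $C_{s_1,\alpha_{i^*}}$ together with the sub-disk of $U_2$ cut off along $A_{i^*}$ encloses a hole that it shares with a neighboring $C_{s_1,\alpha_j}$, contradicting Lemma~\ref{lemma:share-cut1}; when $A_{i^*}$ degenerates to a single vertex of $U_4$ this is precisely the computation already made for the case $U_4=\{\alpha\}$. The contradiction shows the supposition is false, so some $\alpha\in\{\alpha_1,\dots,\alpha_k\}$ satisfies $C_{s_1,\alpha}\subseteq U_2$.

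The step I expect to be the main obstacle is making the laminarity and ``enclosed hole'' claims rigorous: one must pin down exactly which arc of $\partial U_2$ each wrapping region $C_{s_1,\alpha_i}$ encloses, verify that these arcs are nested or disjoint around the disk, and then, among $k\ge 2$ of them, exhibit a concrete hole bounded by two of the regions. This is where planarity and the two no-surrounded-hole lemmas must be combined with care, including a case split according to whether the part of the plane on the far side of the enclosed arc belongs to $U_3$ or to $G\setminus C'_{s_1,v^*}$. The picture is intuitively clear from Fig.~\ref{fig:case2}, but the bookkeeping of which boundary segments are charged against which region is the delicate part.
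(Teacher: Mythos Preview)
Your plan diverges from the paper's argument in a way that matters. The paper does not work with arcs on $\partial U_2$ or a laminarity/innermost-arc analysis at all; instead it compares each $C_{s_1,\alpha}$ directly to the \emph{optimal} region $C'_{s_1,v^*}$. The first, and key, step is a ``non-crossing property'': because $C'_{s_1,v^*}$ is itself a minimum CPMEC, no $C_{s_1,\alpha}$ can cross it so as to leave two separate pieces of $C'_{s_1,v^*}$ on the outside while keeping $s_1$ and $\alpha$ together on the inside (otherwise one could swap one piece for $C_{s_1,\alpha}$ and beat the optimum). With this in hand, the paper classifies each $\alpha_i$ as \emph{left} or \emph{right} according to which side of the $s_1$-component of $C_{s_1,\alpha_i}\cap C'_{s_1,v^*}$ the escaping part of $C_{s_1,\alpha_i}$ lies. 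If all $\alpha_i$ are left (or all right), the extremal one forces exactly the forbidden crossing; hence some adjacent pair has opposite orientation, and then a case split on whether $\alpha_i\in C_{s_1,\alpha_j}$, $\alpha_j\in C_{s_1,\alpha_i}$, or neither, together with Lemmas~\ref{lemma:share-cut1}--\ref{lemma:share-cut2}, again produces a forbidden crossing of $C'_{s_1,v^*}$.

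There is also a concrete gap in your sketch. You try to rule out $C_{s_1,\alpha_i}\subseteq C_{s_1,\alpha_j}$ by saying the contained region ``would join $s_1$ to its $\alpha$-endpoint inside $U_2$'', but Lemma~\ref{lemma:containment} goes the other direction and, in any case, concerns intersection with a CPMEC region rather than with $U_2$; containment of one $C_{s_1,\alpha}$ in another can genuinely occur and need not put $\alpha_i$ into $\Gamma_1(\alpha_i)$. More broadly, your whole argument is anchored to $\partial U_2$, whose boundary is a mixture of $\partial S$ and $\partial C'_{s_1,v^*}$ and carries no minimality by itself, so the ``wrapping'' and ``laminar arcs'' claims have no obvious cut-value inequality to lean on. The paper avoids this by always measuring against $C'_{s_1,v^*}$, whose optimality is what drives every contradiction; I would recommend reorganising your contradiction around the non-crossing property rather than the geometry of $U_2$.
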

  
\begin{proof}
First we show that there exists no node $\alpha$ in $S$  such that  
 the boundary of $C_{s_1,\alpha}$ completely crosses $C'_{s_1,v^{*} }$ and 
 partitions the region occupied by $C'_{s_1,v^{*} }$ 
 into two (or more) connected regions $K_1$ and $K_2$ outside of  $C_{s_{1},\alpha}$ 
and another connected region inside $C_{s_{1},\alpha}$ containing both  
$\alpha$ and $s_1$ (see Fig.~\ref{fig:below}),
where segment $PQ$ is the crossing, and the thick curve denotes the
boundary of $C_{s_1,\alpha}$ and the thin curve denotes the boundary
of $C'_{s_1,v^{*} }$. 
If such $\alpha$ exists, then we can replace $K_2$ with $C_{s_1,\alpha}$
in $C'_{s_1,v^{*} }$ to obtain a smaller CPEMC for $v^{*}$.  This is because the $C_{ep}(s_1,\alpha,t)$ is  less than
the value of the cut surrounding  $K_2$ since
$C_{s_1,\alpha}$ is a CPEMC and $K_2$ contains
both $s_1$ and $\alpha$. 
This contradicts the fact that
  $C'_{s_1,v^{*} }$ is the minimum cut.
We call this property as the non-crossing property. 

 Now we prove the lemma by contradiction. Suppose that any node $\alpha$ in  $U_4$
 has the property that $C_{s_1,\alpha} \not\subseteq U_2$.
 First we classify the nodes in $U_4$ into two classes, left nodes and right nodes.
 For any node $\alpha \in U_4$, consider the components that is in $C_{s_1,\alpha}$
 but not in $C'_{s_1,v^{*} }$, and  denote it as $Y_{\alpha}$. 
 Denote the connected component in $C_{s_1,\alpha} \cap C'_{s_1,v^{*} }$ that 
 contains $s_1$ as $X_{\alpha}$, the connected component in $C_{s_1,\alpha} \cap C'_{s_1,v^{*} }$ that 
 contains $\alpha$ as $H_{\alpha}$, and
   the connected
 component in  $Y_{\alpha}$ that is connected to $H_{\alpha}$ as $Z_{\alpha}$. 
 If $Z_{\alpha}$ is on the left hand side
 of $X_{\alpha}$, $\alpha$ is a left node, and otherwise
 a right node (assuming that we are standing at $s_{1}$ and facing against the portion of  $C_{s_{1}, \alpha}$ outside of $C'_{s_{1},v^{*}}$). Fig.~\ref{fig:left} and Fig.~\ref{fig:right} show these two cases.
 If all the nodes in $U_4$ are left nodes, consider the rightmost node $\alpha$ in 
 $U_4$ and the region  $C_{s_1,\alpha}$, as shown in Fig.~\ref{fig:allleft}. 
In this case $C_{s_1,\alpha}$ will completely cross
$C'_{s_{1},v^{*} }$ and both nodes $\alpha$ and $s_1$ are in the same side
of the crossing, this violates the non-crossing property. Similarly if all nodes are right nodes, we will
 also get a contradiction. So we must have two nodes $\alpha_i$ and   $\alpha_j$ that
 are in the same face with $\alpha_i$ being a left node, and $\alpha_j$
 being a right node. Now we have three cases to consider,
(1) $\alpha_i \in C_{s_1,\alpha_j}$,
(2)  $\alpha_j \in C_{s_1,\alpha_i}$,
and (3)   $\alpha_i \notin C_{s_1,\alpha_j}$ and 
$\alpha_j \notin C_{s_1,\alpha_i}$. In the last
case, suppose that $C_{s_1,\alpha_i}$ does not 
completely cross $C'_{s_{1},v^{*}}$
with $\alpha_i$ and $s_1$ in the same side of the
crossing. Now let $P_{0}$ and $P_{1}$ be  the first two 
intersection points
between $C_{s_1,\alpha_i}$ and $C'_{s_1,v^{*} }$
while $C_{s_1,\alpha_i}$ goes out of $C'_{s_1,v^{*} }$,  and $P_{3}$ and $P_{4}$ be the first two 
intersection points
between $C_{s_1,\alpha_j}$ and $C'_{s_1,v^{*} }$
while $C_{s_1,\alpha_j}$ goes out of $C'_{s_1,v^{*} }$. 
   By Lemma~\ref{lemma:share-cut2}, we know that there is no hole that 
is completely  surrounded by $C_{s_1,\alpha_j}$ and $C_{s_1,\alpha_i}$.
We can see that  $C_{s_1,\alpha_i}$ must have the boundary
segment $P_1P_2$  as shown in  Fig.~\ref{fig:twist} (Here $P_2$ is one of the 
neighboring face of $\alpha_i$ in $G$) such that no point in segment $P_1P_2$
is outside the boundary $C_{s_1,\alpha_j}$ (but can be in the boundary), 
otherwise there exists a hole that is
completely  surrounded by $C_{s_1,\alpha_j}$ and $C_{s_1,\alpha_i}$.
This means that $C_{s_1,\alpha_j}$  completely crosses $C'_{s_1,v^{*}}$, which is a violation of the 
non-crossing property.

For the first two cases,
 we can easily see that by Lemma~\ref{lemma:share-cut1},
either $C_{s_1,\alpha_i}$ or $C_{s_1,\alpha_j}$ 
will completely cross  $C'_{s_1,v^{*} }$ following a similar argument. Thus we have  a
 contradiction for both cases. This implies that the lemma holds.
  \end{proof}

\begin{figure}
\centering
  \includegraphics[width=2.5in]{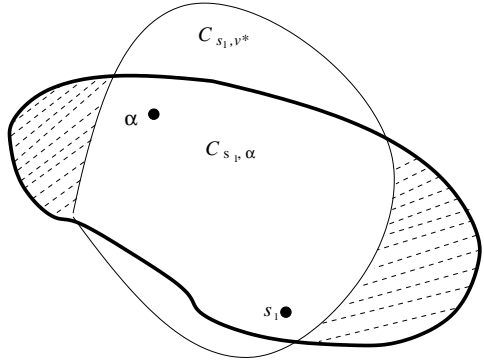}
  \caption{Illustration for containing relationship.}
  \label{fig:contain}
\end{figure}

\begin{figure}
  \centering
  \includegraphics[width=2.5in]{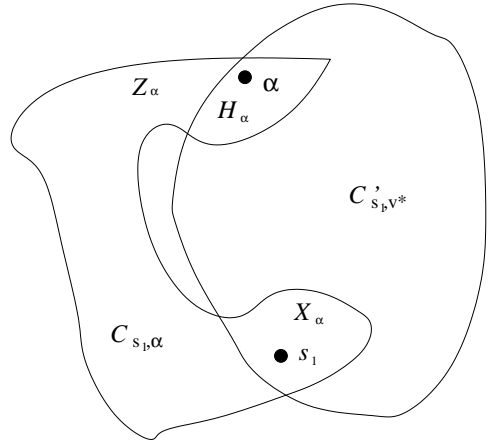}
  \caption{ Illustration for left node.}
  \label{fig:left}
\end{figure}

\begin{figure}
\centering
  \includegraphics[width=2.5in]{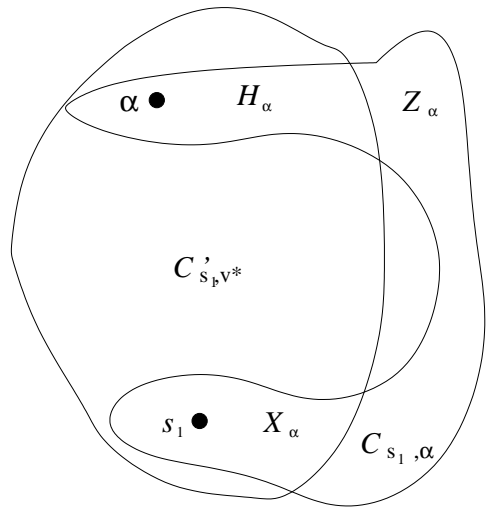}
  \caption{Illustration for right node.}
  \label{fig:right}
\end{figure}

\begin{figure}
  \centering
  \includegraphics[width=2.5in]{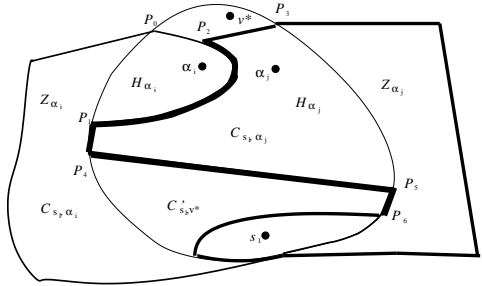}
  \caption{ Illustration for a twisted cut.}
  \label{fig:twist}
\end{figure}

\begin{figure}
\centering
  \includegraphics[width=2.5in]{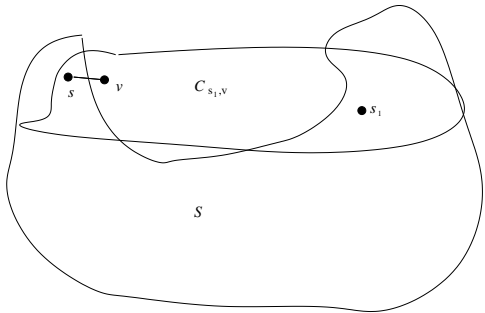}
  \caption{Illustration for holes in $S$.}
  \label{fig:hole}
\end{figure}

\begin{figure}
  \centering
  \includegraphics[width=2.5in]{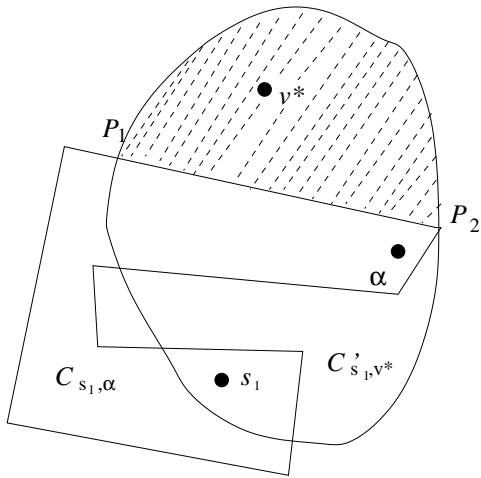}
  \caption{ Illustration for all left nodes.}
  \label{fig:allleft}
\end{figure}

  \begin{figure}
 \centering
 \includegraphics[height=2.5in]{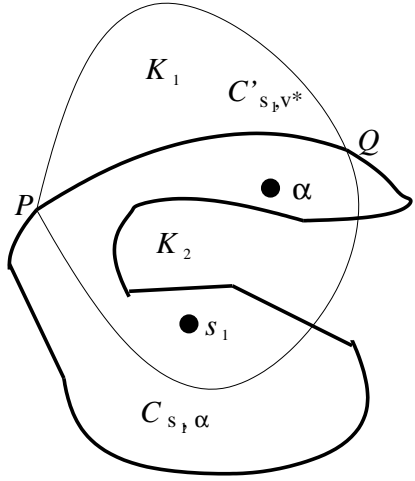}
 \caption{ Illustration for the non-crossing property. }
 \label{fig:below}
 
 \end{figure} 

 From this lemma, we know that there exists such an $\alpha$
satisfying the condition in the Lemma. Let $\beta$ be the node in $U_{3}$ connected to $\alpha$.
  Suppose that the 
  node in $U_3$ that is connected to $\alpha$ is $\beta$. 
 Then $\beta$ must be added into $S$ before $v*$, which is a contradiction.
 Thus we have the correctness of the algorithm.
 
 For the running time of the algorithm, it is easy to see that there are at most $O(n)$ iterations in the while loop and 
 each iteration takes $O(n^2T_{mc})$ time, where $T_{mc}$ is the
 time for computing the minimum cut for two nodes. Thus, the total time complexity is polynomial.

\noindent{\bf Remarks:} The above time bound is mainly for showing CPMEC is in $P$ for planar graphs. We leave it as future work for designing faster algorithms.

 \vspace{-0.05in}
 \begin{theorem}
 \label{the-eplanar}
 CPMEC  in planar graphs can be solved in polynomial time.
 \end{theorem}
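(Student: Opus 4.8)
The plan is to run Algorithm~\ref{alg:CPMEC} and argue that it outputs the correct value $C_{ep}(s_1,s_2,t)$ in polynomial time; all the substantive work is the correctness proof, for which I would assemble the lemmas established above. First I would fix the perturbed weighting $c_i' = c_i + \epsilon_i$ so that distinct subsets of edges have distinct total weights. This makes every minimum $\{D,t\}$-cut — and hence the component $C_{s_1,v}$ it determines in each iteration — unique, which is exactly what is needed for the greedy/Dijkstra-style ordering to be well defined, and any solution on the perturbed instance is optimal on the original one provided the perturbations are chosen small enough. The running time is then immediate: there are $O(n)$ iterations of the while loop, each performing $O(n^2)$ minimum-cut computations (one per candidate neighbour $v$ of each settled $s$), so the total time is polynomial.

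For correctness I would induct on the order in which nodes enter $S$, mirroring the Dijkstra correctness argument, using the invariant that every node already in $S$ has its $C_{ep}$ value computed exactly and that these values are nondecreasing in the order of insertion. The base case is $C_{ep}(s_1,s_1,t) = C_e(s_1,t)$. For the inductive step, suppose $v^*$ is the node added in the current iteration but the computed value $C_{ep}(s_1,v^*,t)$ exceeds the true CPMEC value; let $C'_{s_1,v^*}$ be the connected component of the true optimum containing $s_1$ and $v^*$. Intersecting $C'_{s_1,v^*}$ with the settled set $S$, taking the component $U_2$ of that intersection containing $s_1$, and writing $U_3 = C'_{s_1,v^*}\setminus U_2$ with frontier $U_4\subseteq U_2$ (nodes of $U_2$ adjacent to $U_3$), the goal is to exhibit a node of $U_3$ that should already have been settled. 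By Lemma~\ref{lemma:orientation} there is an $\alpha\in U_4$ with $C_{s_1,\alpha}\subseteq U_2$; its neighbour $\beta\in U_3$ then satisfies $C_{ep}(s_1,\beta,t)\le C(U_3)+C_{ep}(s_1,\alpha,t)-C(U_2,U_3)\le C(U_3)+C(U_2)-C(U_2,U_3) = C'_{ep}(s_1,v^*,t) < C_{ep}(s_1,v^*,t)$, so by the induction hypothesis $\beta$ would have entered $S$ before $v^*$ — a contradiction. Hence the computed value is exact, and since $s_2$ eventually enters $S$, the algorithm returns $C_{ep}(s_1,s_2,t)$ correctly.

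The main obstacle is the topological scaffolding behind Lemma~\ref{lemma:orientation}, which in turn rests on the ``no enclosed hole'' statements of Lemmas~\ref{lemma:share-cut1} and~\ref{lemma:share-cut2}, the non-crossing property, and Lemma~\ref{lemma:hole} (no hole ever forms in $S$, so that $U_2$ itself is hole-free and the exchange arguments apply to it). The recurring device is an exchange argument on the planar embedding: if a face were completely enclosed between the boundary curves of two neighbouring CPMECs, or if a settled component's boundary crossed $C'_{s_1,v^*}$ while leaving $s_1$ and that component's representative on the same side, one could splice one cut together with the enclosed region and, by uniqueness of cut weights, obtain a strictly smaller CPMEC — impossible. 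The delicate part is the left-node/right-node classification of the frontier nodes in $U_4$ and the verification that not all of them can be of the same type without forcing a crossing, which is what ultimately yields the inside-$U_2$ component required above. As the text already notes, this reasoning is genuinely planar and does not carry over to general graphs. With these ingredients in place the induction closes, proving the theorem.
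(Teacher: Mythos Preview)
Your proposal is correct and follows essentially the same route as the paper: perturb to make cuts unique, run Algorithm~\ref{alg:CPMEC}, bound the running time by $O(n)$ iterations of $O(n^{2})$ minimum-cut calls, and prove correctness by a Dijkstra-style induction that, at the step where $v^{*}$ is added, uses Lemma~\ref{lemma:orientation} (supported by Lemmas~\ref{lemma:containment}, \ref{lemma:share-cut1}, \ref{lemma:share-cut2}, \ref{lemma:hole} and the non-crossing property) to locate an $\alpha\in U_{4}$ with $C_{s_{1},\alpha}\subseteq U_{2}$ and derive the contradiction $C_{ep}(s_{1},\beta,t)\le C'_{ep}(s_{1},v^{*},t)$. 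The only cosmetic difference is that the paper first disposes of the easy case $\alpha\in\Gamma_{1}$ and the single-node $U_{4}$ case before invoking Lemma~\ref{lemma:orientation}, whereas you fold everything into one appeal to that lemma; since the one-node case is trivially covered by the same inequality chain, this is harmless.
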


Next we present another algorithm for a special case of the CPMEC problem
in which $s_1$ and $t$ are in the same face of a planar graph. The idea of  the algorithm is completely different
 from the above algorithm for general planar graphs and it has an interesting relationship with another
 problem, called {\em Location Constrained Shortest Path
(LCSP)}. LCSP  finds applications in VLSI design and robotics.  LCSP  corresponds to the 
CPMEC problem in its dual planar graph.


\vspace{-0.05in}
  \begin{definition}[LCSP]
Let $G=(V,E)$ be a planar graph with a fixed embedding and each edge $e_{i} \in E$ associated with a weight $w_i$. Let $A$, $B$ be two nodes on the boundary of the embedding (without loss of generality, assume that the segment connecting $A$ and $B$ is horizontal) and $C$ be an interior face.  Find a shortest path from $A$ to $B$ along the interior nodes of $G$ with $C$ staying above the path.
  \end{definition}
\vspace{-0.15in}

\begin{theorem}
\label{the-lcsp}
The  LCSP problem can be solved in polynomial time.     
\end{theorem}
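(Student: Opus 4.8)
The plan is to reduce LCSP to a single shortest-path computation in an auxiliary planar graph obtained by ``cutting the plane open'' at the obstacle face $C$. With the embedding fixed and $AB$ horizontal as in the statement, the outer boundary is split by $A$ and $B$ into an upper arc $\partial_U$ and a lower arc $\partial_L$. Pick a point $z$ in the interior of $C$ and draw a simple curve $\beta$ from $z$ to a point of $\partial_U$, perturbed so that it avoids the lower arc and every vertex of $G$ and crosses a set of edges $f_1,\dots,f_m$ transversally at interior points. The key topological fact is this: for a simple $A$--$B$ path $P$ through interior nodes, ``$C$ stays above $P$'' holds iff $z$ lies in the component of the closed disk minus $P$ incident to $\partial_U$; since $P\cup\partial_L$ is a Jordan curve that $\beta$ meets only along $P$, this is equivalent to $P$ using an even number of the edges $f_1,\dots,f_m$, and (because $P$ is a simple arc it is homotopic in the $z$-punctured disk to the trivial arc running just below $z$) to $P$ lying in the ``below-$z$'' homotopy class. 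Thus LCSP is exactly the problem of finding a shortest $A$--$B$ path in that homotopy class.

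Next I build the auxiliary graph $\widehat G$. Subdivide each $f_i$ by a new vertex $w_i$ at the point where $\beta$ crosses it, so that $\beta$ meets $G$ only at the $w_i$; cut $G$ open along $\beta$, splitting each $w_i$ into a ``left'' copy $w_i^L$ and a ``right'' copy $w_i^R$ with the two half-edges of $f_i$ reattached to the appropriate bank. Take two disjoint copies of the resulting graph and glue the left bank of copy~$1$ to the right bank of copy~$2$, and the left bank of copy~$2$ to the right bank of copy~$1$; the result $\widehat G$ is the double cover of $G$ branched over the two faces $C$ and the outer face, hence is planar, has $O(|V|+|E|)=O(|V|)$ vertices and edges, and inherits the edge weights of $G$ (a subdivided edge splits its weight between its two halves). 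Deleting the non-$\{A,B\}$ outer-boundary vertices of each copy enforces the ``interior nodes only'' requirement. In the projection $\widehat G\to G$, an edge lying over some $f_i$ switches copies while every other edge stays within a copy, so for any walk in $\widehat G$ the parity of its copy at the end equals the parity of the number of $f_i$-edges it traverses.

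Consequently, the unique lift of a simple valid $A$--$B$ path of $G$ starting at the copy-$1$ vertex $A^{(1)}$ is a simple path ending at $B^{(1)}$ (even count), and conversely the projection of any $A^{(1)}$--$B^{(1)}$ path of $\widehat G$ is an $A$--$B$ walk traversing an even number of $f_i$-edges; hence the optimal LCSP value equals the length of a shortest $A^{(1)}$--$B^{(1)}$ path in $\widehat G$, computable in polynomial time by Dijkstra's algorithm since all weights are nonnegative, and this also gives, via planar duality, a second and more elementary algorithm for CPMEC when $s_1$ and $t$ lie in a common face. The step I expect to be the main obstacle is this last correspondence in the ``$\ge$'' direction: a shortest $A^{(1)}$--$B^{(1)}$ path in $\widehat G$ is simple in $\widehat G$ but could a priori project to a non-simple walk of $G$ (one visiting both copies of an original vertex), so one must show that at the optimum this cannot happen --- equivalently, that among minimum-length $A$--$B$ walks of $G$ using an even number of the $f_i$ there is a simple one in the below-$z$ class. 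This is handled by an uncrossing/exchange argument exploiting the positivity of the weights: a self-intersection of the projected walk bounds a sub-loop, and such detours (which, together with the rest of the walk, change the winding around $z$ only within the ``even'' coset) can be excised without increasing the length, so one may push the optimum to a simple representative of the same or smaller length. The remaining points --- planarity and the linear size of $\widehat G$, and the robustness of the parity characterization to the perturbation of $\beta$ --- are routine Jordan-curve and planarity arguments.
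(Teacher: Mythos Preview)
Your reduction to a single Dijkstra computation in a branched double cover is appealing, but the gap you yourself flagged in the ``$\ge$'' direction is real and the uncrossing sketch does not close it. For a \emph{simple} $A$--$B$ arc, ``even $\beta$-parity'' and ``$C$ above'' coincide because a simple arc sits in one of only two homotopy classes of the $z$-punctured disk. But an $A^{(1)}$--$B^{(1)}$ path in $\widehat G$ only certifies even parity of the projected \emph{walk}, and when that walk revisits a vertex $v$, the sub-loop at $v$ lifts to a $v^{(1)}$--$v^{(2)}$ path and hence has \emph{odd} winding around $z$; excising it flips the parity rather than staying ``within the even coset,'' so you cannot shortcut without destroying validity.

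This is not a technicality. Take $A,B$ on the outer boundary, an interior vertex $v$ with unit-weight edges $Av,vB$ so that the path $A\!-\!v\!-\!B$ has $C$ below it, attach a cheap $4$-cycle $v\!-\!c_1\!-\!c_2\!-\!c_3\!-\!v$ enclosing $C$, and add one long interior $A$--$B$ path running below $C$ (the only valid simple path). The walk $A\!-\!v\!-\!c_3\!-\!c_2\!-\!c_1\!-\!v\!-\!B$ has even parity and length $6$, and lifts to a simple $A^{(1)}$--$B^{(1)}$ path in $\widehat G$ (it visits $(v,2)$ and $(v,1)$, which are distinct), yet the LCSP optimum can be made arbitrarily large. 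The same instance defeats the infinite-cyclic cover as well (the reversed loop gives winding $0$), so switching to the ``correct'' homotopy class does not help: shortest walks in a homotopy class need not be simple, and here simplicity is essential.

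The paper's proof avoids this entirely by never leaving the realm of simple paths. It observes that a valid path, if it meets the boundary of $C$, must traverse some edge of $C$ in the counterclockwise sense, and then, for each such directed edge $UV$, calls the polynomial-time algorithm for two node-disjoint paths of minimum total length with endpoints on two faces to obtain the shortest simple $A$--$B$ path through $UV$. Peeling $C$ outward layer by layer and taking the global minimum yields the answer. The disjoint-paths subroutine is exactly the mechanism that enforces simplicity; your covering construction has no substitute for it.
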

\vspace{-0.1in}

\begin{proof}
We first show that given two nodes $A$ and $B$ in the outer face, and another interior edge $UV$ of the
graph, we can find a shortest interior path from
$A$ to $U$, then passing $V$ through edge $UV$, and reaching $B$ in polynomial time
(Fig.~\ref{fig:dummy1} ).
That is, we can find the shortest path that passes a 
specified edge along a specified direction.
 This can be done  in polynomial time based 
on the algorithm for two node-disjoint paths with minimum sum length 
when the end points of the two paths are in two faces, as shown in~\cite{KS10,VS08}.

Next, we show that we can solve the LCSP problem in polynomial time in an iterative fashion.

For this,  we first make an observation that for any path to keep the inner face $C$ above the path, a shortest path cannot pass 
any of its edges in the clockwise direction (with respect to face $C$).
This is shown in Fig.~\ref{fig:clockwise}. If we want to find a shortest path
from $A$ to $B$ and keep face $C$ above the path, then the path cannot pass through edge $DE$ along the direction from $D$ to 
$E$; otherwise the path will not keep $C$ above it. Hence a valid path will only pass through the edge of $C$ (if it does)
in a couterclockwise direction. On the contrary, if an interior path passes through an edge of $C$ in a couterclockwise direction,
then  $C$ must be above the path. 

Thus to find the LCSP, we conduct the following computation for each iterative step. For every edge of the face $C$, we first specify its 
counterclockwise direction, and compute the shortest path from $A$ to $B$ passing the edge in  the counterclockwise direction. We store  
these shortest paths in some data structure. Then, we remove all edges of face $C$, and all remaining degree-one
nodes. 

Now the face $C$ is enlarged.  We can repeat the above iterative step, and store the computed shortest paths in the same data structure. 

Finally, face $C$ will reach the boundary of the graph, in which case we can find the shortest path trivially. Now we can choose
 the shortest path among all stored paths which is the desired shortest path. 

\begin{figure}
\centering
  \includegraphics[width=2.5in]{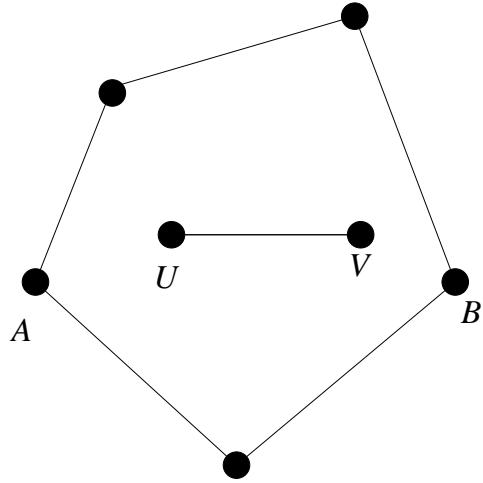}
  \caption{Shortest path with one specified intermediate edge.}
  \label{fig:dummy1}
\end{figure}
\begin{figure}
  \centering
  \includegraphics[width=2.5in]{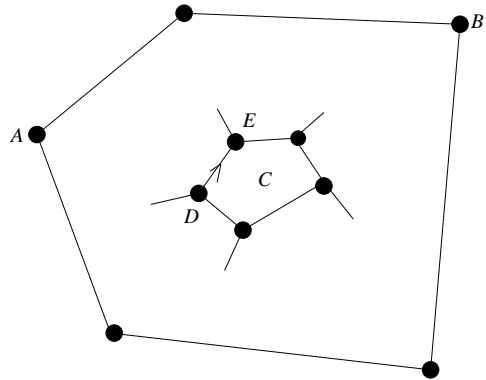}
  \caption{ Clockwise edge of the face.}
  \label{fig:clockwise}
\end{figure}
 \end{proof}

\begin{theorem}
\label{the-ecplanar}
There exists a polynomial time algorithm for CPMEC problem in planar graphs
when $s_1$ and $t$ are in the same face, based 
on the algorithm for LCSP.
\end{theorem}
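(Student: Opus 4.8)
The plan is to reduce CPMEC on a planar graph $G$ with $s_1$ and $t$ on a common face to an instance of LCSP on the planar dual of $G$, and then invoke Theorem~\ref{the-lcsp}. First I would set up the duality. Take the common face of $s_1$ and $t$ to be the outer face of a fixed embedding, and cut the outer face open along a Jordan arc from $s_1$ to $t$; equivalently, add a virtual edge $e_0=(s_1,t)$ of weight $+\infty$ through the outer face, splitting it into two faces $F_1$ and $F_2$. I would then recall the classical correspondence between minimal $s_1$--$t$ edge cuts of a planar graph with both terminals on the outer face and simple paths in the dual graph: a set $C\subseteq E$ is an inclusion-minimal $s_1$--$t$ cut of $G$ exactly when the corresponding dual edges form a simple path $\pi$ in the dual graph $G^{\ast}$ between the two dual vertices $A$ and $B$ obtained by splitting the dual vertex of the outer face (the duals of $F_1$ and $F_2$), the weight of the cut equals the length of $\pi$, and the closed curve $\pi\cup e_0^{\ast}$ separates the plane into two regions, one containing $s_1$ and one containing $t$, with every vertex of $G$ (equivalently, every bounded face of $G^{\ast}$) lying entirely in one region. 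After the splitting, $F_1$ and $F_2$ merge into a single face that we designate as the outer face of $G^{\ast}$, so $A$ and $B$ lie on the boundary of the embedding of $G^{\ast}$, as LCSP requires.

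Next I would translate the connectivity-preserving requirement. A connectivity-preserving cut is one whose removal leaves $s_1$ and $t$ in different components but $s_1$ and $s_2$ in the same component. Given any optimal connectivity-preserving cut, it must equal $\delta(X)$ where $X$ is the component of $s_1$ after removal (any edge of $\delta(X)$ not in the cut would reconnect $X$ to the rest), so $G[X]$ is connected; and $G[V\setminus X]$ is connected as well, for otherwise a component $W$ of $V\setminus X$ not containing $t$ would have all of its edges to the rest of $G$ going into $X$, so (since weights are positive) $\delta(X\cup W)\subsetneq\delta(X)$ would be a strictly lighter cut that still separates $s_1$ from $t$ and still keeps $s_2$ with $s_1$ --- a contradiction. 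Hence the optimum is realised by a $\delta(X)$ with both sides connected, i.e.\ by a simple dual path $\pi$ as above; and since $X$ is the side of $\pi\cup e_0^{\ast}$ containing $s_1$, the connectivity-preserving requirement $s_2\in X$ is precisely the requirement that the bounded face of $G^{\ast}$ corresponding to the vertex $s_2$ lie on the $s_1$-side of $\pi\cup e_0^{\ast}$. Fixing the orientation of the embedding so that the $s_1$-side is ``above'', this is exactly the constraint of LCSP with the distinguished interior face $C$ taken to be the dual face of $s_2$ and endpoints $A,B$ as above; conversely, every LCSP-feasible $A$--$B$ path yields a connectivity-preserving edge cut of the same weight. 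Therefore the minimum connectivity-preserving edge cut and the LCSP optimum on $G^{\ast}$ have equal value and correspond to each other.

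Finally, I would run the polynomial-time LCSP algorithm of Theorem~\ref{the-lcsp} on $G^{\ast}$ with endpoints $A,B$ and forbidden face $C=s_2^{\ast}$, and map the returned shortest path back to an edge set of $G$; by the correspondence this is a minimum connectivity-preserving edge cut, and the whole procedure is polynomial since $|V(G^{\ast})|+|E(G^{\ast})|$ is polynomial in $|V(G)|+|E(G)|$ and LCSP is solvable in polynomial time. This yields a second, conceptually different algorithm for the case where $s_1$ and $t$ share a face, complementing the general planar algorithm of Theorem~\ref{the-eplanar}.

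The step I expect to be the main obstacle is making the translation of the connectivity constraint rigorous: proving that the optimum is attained by a genuine simple dual path (the ``both sides connected'' argument, together with care when the dual cycle meets the split vertex more than once), and pinning down the orientation so that ``$s_2$ on the $s_1$-side'' coincides with ``$C$ above the path'' in the exact sense used by the LCSP definition. The remaining points --- parallel dual edges, the degenerate cases $s_2\in\{s_1,t\}$ or no valid cut, and the situation where $s_2$ is incident to the outer face so that $C$ degenerates --- are routine and can be handled separately.
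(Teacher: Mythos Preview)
Your proposal is correct and follows the same underlying duality idea as the paper, but the execution differs in one notable way. The paper does \emph{not} introduce a virtual $s_1t$ edge; instead it leaves the outer-face dual vertex unsplit, so the dual has many boundary vertices, and it then enumerates \emph{all} pairs of boundary dual vertices that are separated (along the boundary) by the faces $s_1^{\ast}$ and $t^{\ast}$, solving one LCSP instance per pair and taking the minimum. Your virtual-edge/split-vertex construction collapses this outer enumeration into a single LCSP call with fixed endpoints $A,B$, which is the classical Itai--Shiloach/Reif trick and is both cleaner and asymptotically cheaper.

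You are also more careful than the paper on the point you flagged as the main obstacle: the paper simply asserts that a shortest boundary-to-boundary dual path keeping $s_2^{\ast}$ above yields a cut in which $s_1$ and $s_2$ remain connected, whereas you actually argue that an optimal connectivity-preserving cut is a bond (both sides connected) and hence corresponds to a \emph{simple} dual path, and conversely. That step is exactly what justifies restricting attention to simple dual paths and is needed for either version of the argument, so your proof is in fact more complete than the paper's on this point.
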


\begin{proof}
We can see that any shortest path
between two nodes of the boundary in the dual graph will pass through
zero  or two edges in every face, and any shortest path  
between two boundary nodes separated by $s_1$ and $t$
and keeping $s_2$ above the path will keep $s_1$ and $s_2$ connected
in the original graph if we convert the shortest path into a cut.

Now we  can try all  pairs of nodes
that are separated by $s_1$ and $t$ on the boundary of the dual graph; 
then find the shortest that keeps $s_2$ above the path, using
the algorithm in Theorem \ref{the-lcsp}. We choose the
shortest path among all these trials, which is exactly
the minimum edge cut in the original graph (it is easy to see that 
this cut will keep $s_1$ and $s_2$ connected ). 
\end{proof}

\section{ Remarks and Future Work}
\label{sec-fw}
 
Several issues related to the CPMC problem remain open and will be future research directions.
 First of all, we conjecture that the connectivity preserving minimum
 node cut problem cannot be approximated
 within $n^{\epsilon}$ for some $\epsilon < 1$, or even for any   $\epsilon < 1$. Secondly, the
 hardness of CPMEC problem  (3-node case) for undirected graphs is still
 open. As we mentioned earlier, the 
 hardness proofs for the CPMNC problem cannot be directly extended to
 the CPMEC problem. Thus new proving techniques are needed.
 Thirdly,  we believe that more efficient
precise or approximation algorithms exist for some special graphs and it will  be another future research direction.





\bibliographystyle{elsarticle-num}
\bibliography{cpmc}







\end{document}